\numberwithin{theorem}{section}
\newtheorem{remark}{Remark}[section]
\newtheorem{assume}{Assumption}[section]
\DeclareMathOperator{\Prox}{Prox}
\DeclareMathOperator{\Proj}{Proj}
\DeclareMathOperator{\rank}{rank}
\DeclareMathOperator{\dist}{dist}
\DeclareMathOperator{\ri}{ri}
\DeclareMathOperator{\gph}{gph}
\DeclareMathOperator{\dom}{dom}
\DeclareMathOperator{\aff}{aff}
\newcommand{\SN}{\mathcal{S}^N}
\newcommand{\PSD}{\mathcal{S}_{+}^N}
\newcommand{\A}[1]{\mathscr{A}\parens{#1}}
\newcommand{\TheTitle}{Coherence retrieval using trace regularization}
\newcommand{\TheAuthors}{C. Bao, G. Barbastathis, H. Ji, Z. Shen, Z. Zhang}
\headers{\TheTitle}{\TheAuthors}
\title{{\TheTitle}\thanks{Submitted to the editors April 3, 2017.
\funding{The work of the authors was partially supported by Singapore
MOE Research Grant MOE2014-T2-1-065 and MOE2012-T3-1-008.  The
research was supported by the National Research Foundation (NRF),
Prime Minister’s Office, Singapore, under its CREATE programme,
Singapore-MIT Alliance for Research and Technology (SMART) BioSystems
and Micromechanics (BioSyM) IRG.}}}
\author{
  Chenglong Bao\thanks{Department of Mathematics, National University of Singapore
    (\email{matbc, matjh, matzuows@nus.edu.sg}).}
  \and
  George Barbastathis\thanks{Singapore-MIT Alliance for Research and Technology (SMART) Centre, and Department of Mechanical Engineering, Massachusetts Institute of Technology,  (\email{gbarb@mit.edu}).}
  \and
  Hui Ji\footnotemark[2]
  \and
  Zuowei Shen\footnotemark[2]
  \and
  Zhengyun Zhang\thanks{Singapore-MIT Alliance for Research and Technology (SMART) Centre, (\email{zhengyun@smart.mit.edu}).}
}
\begin{document}

\maketitle

\begin{abstract}
The mutual intensity and its equivalent phase-space representations
quantify an optical field's state of coherence and are important tools
in the study of light propagation and dynamics, but they can only be
estimated indirectly from measurements through a process called
coherence retrieval, otherwise known as phase-space tomography.  As
practical considerations often rule out the availability of a complete
set of measurements, coherence retrieval is usually a challenging
high-dimensional ill-posed inverse problem. In this paper, we propose
a trace-regularized optimization model for coherence retrieval and a
provably-convergent adaptive accelerated proximal gradient algorithm
for solving the resulting problem.  Applying our model and algorithm
to both simulated and experimental data, we demonstrate an improvement
in reconstruction quality over previous models as well as an increase
in convergence speed compared to existing first-order methods.
\end{abstract}

\begin{keywords}
coherence retrieval, phase-space tomography, trace regularization,
adaptive restart
\end{keywords}

\begin{AMS}
  78M30, 78M50, 90C22
\end{AMS}

\section{Introduction}
\emph{Coherence retrieval} is the unified mathematical treatment for
two analogous physical measurement processes: estimating the density
matrix of a quantum state, and reconstructing the mutual intensity of
partially coherent light.  Many, if not most coherence retrieval
methods in optics originate from the seminal phase-space tomography
approach~\cite{Raymer1994, McAlister1995}.  This reconstructed mutual
intensity, or equivalent phase-space representations such as the
Wigner distribution, is highly useful in that it can predict the
three-dimensional distribution of light intensity after propagation
through \emph{any} known linear optical system~\cite{Born2005,
  Goodman1985}.  Likewise, knowledge of both the input and output
coherence state of a system enables study of its
dynamics~\cite{Clark2014}.  Furthermore, the mutual intensity's
phase-space representations enable intuitive understanding of light
propagation~\cite{Testorf2009} and form the physical basis for light
field imaging in the field of computational
photography~\cite{Ng2005_2, Ng2005, Levoy2009, Zhang2009}.

In a coherence retrieval experiment, stationary quasimonochromatic
light, \emph{i.e.,} narrow in temporal frequency with mean wavelength
$\lambda$, in some unknown state of coherence is sent through one or
more known optical systems, with the intensity measured at their
outputs.  The spatial coherence properties of this light source,
\emph{e.g.,} the mutual intensity, are then recovered from these
measurements~\cite{Thompson1957, Asakura1972, Bartelt1980,
  Michalski1985, Li1988, Barreiro1993, Raymer1994, McAlister1995,
  Dragoman2003, Rydberg2007, Tian2012, Waller2012, Tian2013,
  Stoklasa2014}.  More concretely, the electric field for
quasimonochromatic light on some source plane is described by a
complex-valued stochastic function $U\parens{\vec p}$, where $\vec
p\in\realvectors{2}$ denotes a position on this plane.  For the
special case of a fully coherent field, such as light from a laser,
$U\parens{\vec p}$ is deterministic, yielding the amplitude and phase
for the wave function.  However, in this paper we consider the general
case where the field is not necessarily coherent, such as one
emanating from a light emitting diode (LED), and thus $U\parens{\vec
  p}$ can take on many possible values~\cite{Mandel1995}.\footnote{The
  analogous case in quantum tomography of a partially coherent field
  is a mixed state, whereas the fully coherent field corresponds to a
  pure state.}  The mutual intensity is the correlation function
describing this electric field, given by:
\begin{equation}
J\parens{\vec p_1,\vec p_2} = \expectation\brackets[\big]{U\parens{\vec
    p_1}U^\conj\parens{\vec p_2}}=\iint
\upsilon_1\upsilon_2^\conj 
P_U\parens{\upsilon_1,\upsilon_2;\vec
  p_1,\vec p_2}\dd{\upsilon_1}\dd{\upsilon_2}\text{,}
\label{eq:mutualintensity}
\end{equation}
where $\expectation\brackets{\cdot}$ denotes the expected value
(\emph{i.e.,} the ensemble average), $^\conj$ denotes the complex
conjugate, and $P_U\parens{\upsilon_1,\upsilon_2;\vec p_1,\vec p_2}$
is the joint probability density function for $U\parens{\vec
  p_1}=\upsilon_1$ and $U\parens{\vec p_2}=\upsilon_2$.  Other
phase-space representations such as the Wigner distribution and
ambiguity function can be obtained from the mutual intensity via
Fourier transforms~\cite{Testorf2009}.

In typical experimental conditions, the optical field propagates
linearly from a source point $\vec p$ to one of the $M$ measurement
points $\vec q_m\in\realvectors{3}$, traversing various parts of an
experimental apparatus.  This linear relationship is characterized by
the possibly stochastic kernel $K\parens{\vec p,\vec q_m}$, known as
the transmission function~\cite{Born2005} or the amplitude spread
function~\cite{Goodman1985}:
\begin{equation*}
V\parens{\vec q_m} = \int U\parens{\vec
  p}K\parens{\vec p,\vec q_m}\dd{\vec p}
\end{equation*}
where $V\parens{\vec q_m}$ describes the output field at position
$\vec q_m$.  The resulting intensity is the expected value of the
magnitude of the field squared, yielding the Hopkins integral:
\begin{equation*}
I\parens{\vec q_m} =
\expectation\brackets[\big]{V\parens{\vec
    q_m}V^\conj\parens{\vec q_m}} = \iint
\expectation\brackets[\big]{U\parens{\vec
    p_1}U^\conj\parens{\vec
    p_2}}\expectation\brackets[\big]{K\parens{\vec p_1,\vec q_m}K^\conj\parens{\vec
  p_2,\vec q_m}} \dd{\vec p_1}\dd{\vec p_2}
\end{equation*}
if we assume that the statistical fluctuations in the optical field
are independent of the statistical fluctuations in $K$, typically the
case in practical scenarios.  

To enable numerical computation, we choose some basis $\set{\xi_i}_i$
and approximate the field on the source plane as $U\parens{\vec
  p}\approx\sum_{i=1}^N\nolimits u_i\xi_i\parens{\vec p}$ where
$\set{u_i}_i$ are complex-valued random variables.  This naturally
yields a discretization for the mutual
intensity~\eqref{eq:mutualintensity} as a Hermitian positive
semidefinite matrix $\mtx X = \expectation\parens*{\vec u\vech u\,}$
such that
\begin{equation}
J\parens{\vec p_1,\vec p_2} \approx \vect\xi\parens{\vec p_1}\mtx
X\vec\xi^*\parens{\vec p_2}\text{,}
\label{eq:discretemutualintensity}
\end{equation}
where $^\T$ denotes transpose, $^\HT$ denotes conjugate transpose,
$\vec u=\brackets[\big]{u_1,u_2,\ldots,u_N}^\T\in\complexvectors{N}$,
and $\vec\xi\parens{\vec p}=\brackets[\big]{\xi_1\parens{\vec
    p},\xi_2\parens{\vec p},\ldots,\xi_N\parens{\vec p}}^\T$.  The
basis $\{\xi_i\}_i$ is chosen based on both ease of computation and
efficiency in representing the field and mutual intensity; typical
basis functions can be constructed from $\sinc$ functions, Hermite
functions and the prolate spheroidal functions.

Using the definition of the mutual
intensity~\eqref{eq:mutualintensity} as well as our discretization
scheme~\eqref{eq:discretemutualintensity}, we obtain:
\begin{equation}
I\parens{\vec q_m} \approx \iint \vect\xi\parens{\vec p_1}\mtx
X\vec\xi^*\parens{\vec p_2}\expectation\brackets[\big]{K\parens{\vec
    p_1,\vec q_m}K^\conj\parens{\vec p_2,\vec q_m}} \dd{\vec
  p_1}\dd{\vec p_2}
\label{eq:halfway}
\end{equation}
Now let us define a vector $\vec k_m\in\mathbb{C}^N$ with $n$th
element $\vec k_m[n]=\int \xi_n\parens{\vec p}K\parens{\vec p,\vec
  q_m} \dd{\vec p}$ being a random variable with probability density
$P_{\vec k_m}\parens*{\vec{\hat k}_m}$ for each possible realization
$\vec{\hat k}_m\in\complexvectors{N}$.  We can then construct a discretized
measurement operator that characterizes propagation of light from the
source to point $\vec r_m$:
\begin{equation*}
\mtx K_m = \expectation\brackets[\Big]{\vec k_m^\conj\vect k_m} = \int P_{\vec
  k_m}\parens*{\vec{\hat k}_m}\vec {\hat k}_m^\conj\vect {\hat
  k}_m\dd{\vec{\hat k}_m}\text{.}
\end{equation*}
Substituting these new definitions back into \eqref{eq:halfway}, we
obtain that the intensity is an inner product of the source mutual
intensity matrix $\mtx X$ and the discretized measurement operator
$\mtx K_m$ (\emph{i.e.,} the sum of the elementwise product of $\mtx
X$ with the complex conjugate of $\mtx K_m$):
\begin{equation*}
I\parens{\vec q_m} \approx \innerproduct{\mtx K_m}{\mtx X} =
\trace\parens*{\mtxh K_m\mtx X}
\end{equation*}
where $\innerproduct{\mtx A}{\mtx B}$ denotes the inner product of
matrices $\mtx A$ and $\mtx B$, and $\trace\parens*{\cdot}$ denotes
the trace. We note that the $\mtx K_m$s are Hermitian, but we keep the
conjugate transpose notation here to be consistent with later uses of
inner products in the paper, which involve matrices which may or may
not be Hermitian.  In many situations, the $\mtx K_m$s are low-rank;
for example, rank-one operators were used
in~\cite{Tian2012,Zhang2013}.

We can model measurement noise as an additive term:
\begin{equation}
y_m = \trace\parens{\mtxh K_m\mtx X}+n_m,
\label{eq:measurement}
\end{equation}
where $y_m$ denotes the $m$th measured value, and
  $n_m$ denotes the noise term.  The $n_m$s can be well-approximated
by zero-mean normal distributions with standard deviations $\sigma_m$s
if measurements are from a standard camera sensor pixel with at least
$\sim10$ photons recorded and the noise level before quantization is
much larger than a single quantization level, \emph{i.e.,} if the
noise is predominantly Poisson and the rate parameter is high enough.

The problem of coherence retrieval is then about recovering $\mtx X$
from the measured intensities $\{y_m\}_m$, which can be formulated as
solving a weighted least-squares semidefinite problem:
\begin{equation}
\label{eq:0}
\begin{aligned}
&\minimize[\mtx
    X]&&\sum_{m=1}^M\frac{1}{2\sigma_m^2}\parens*{\trace\parens{\mtxh
      K_m\mtx X}-y_m}^2&&&&\subjectto&&\mtx X\in\PSD,
\end{aligned}
\end{equation}
where $\SN$ is the set of $N\times N$ Hermitian matrices, $\PSD$ is
the set of $N\times N$ positive semidefinite matrices in $\SN$, and
$M$ is the total number of measurements.  

Unlike the phase retrieval problem~\cite{GerchbergSaxton1972,
  Fienup1982, Chai2011, Candes2013}, the rank of $\mtx X$ is generally
bigger than one; only in the special case of a coherent field do we
have that $\rank\mtx X=1$, and that is what is referred to as phase
retrieval.  With coherent light in phase retrieval, the goal is to
reconstruct the field, represented by a \emph{deterministic} vector
$\vec u$; the lifting approach~\cite{Chai2011, Candes2013}
reformulates the problem as seeking the rank-one matrix $\vec u\vech
u$ instead, with low-rank promoters to rule out higher rank solutions.
However, with coherence retrieval, we consider the more general
partially coherent fields~\cite{Mandel1995}---$\vec u$ is a
\emph{stochastic} vector, and we seek to reconstruct $\mtx
X=\expectation\parens*{\vec u\vech u}$, which can have rank greater
than $1$ due to $\mtx X$ being an expectation across many possible
rank-one matrices.  Hence, our goal in coherence retreival is not to
recover a single $N$-dimensional vector lifted to matrix form as was
the case in phase retrieval, but rather to recover an $N\times N$
matrix, which is not necessarily rank-one and might not even be low
rank.

In practice, coherence retrieval is usually an ill-posed inverse
problem. Recovering the $N\times N$ coefficient matrix $\mtx X$
requires $O\parens{N^2}$ measurements and thus a forward operator of
size $O\parens{N^4}$, which would require prohibitively high
computational and storage costs.  Furthermore, even with smaller
values of $N$, sometimes it is not straight-forward to take enough
measurements for the forward operator to have a reasonable condition
number.  For example, in translation-only phase-space tomography, the
camera would need to be translated infinitely far away as well as
behind the source to capture a complete set of measurements.

To tackle these difficulties, one could introduce optical elements
such as lenses~\cite{McAlister1995, Marks2000}, but this can also
introduce systematic error in the solution without accurate enough
calibration of optical element positioning and aberrations.  Another
approach is to add a regularization term to \eqref{eq:0}; for example,
nuclear norm regularization is used in \cite{Tian2012} to promote
low-rank solutions. However, the low-rank prior may not be reasonable
in many coherence retrieval scenarios in practice, and a single scalar
parameter for the regularizer may not be flexible enough, either.

This paper aims to develop an effective approach for coherence
retrieval. We propose a trace-regularized model based on a penalty
term physically analogous to the total intensity of light after
passing through a chosen (virtual) linear system. The proposed
regularization is motivated by the concept that for a set of solutions
with similar likelihood, the one which is least energetic is likely to
be closer to the truth---the extra intensity could simply be an
artifact due to noise.  Flexibility in choosing an arbitrary virtual
system enables encoding additional \emph{a priori} information about
the solution as well.  These concepts lead to the following
trace-regularized optimization formulation for coherence retrieval:
\begin{equation}
\begin{aligned}
&\minimize[\mtx X] && 
\sum_{m=1}^M\frac{1}{2\sigma_m^2}\brackets*{\trace\parens[\big]{\mtxh K_m\mtx X}-y_m}^2+
\mu\trace\parens[\big]{\mtxh R\mtx X}
&&&& \subjectto && \mtx X\in\PSD,
\end{aligned}
\label{eq:regularized00}
\end{equation}
where $\mu>0$ is the penalty parameter and $\mtx R\in\PSD$ is a
virtual measurement operator encoding our choice of virtual system.
Given \emph{a priori} information, we can set $\mtx R$ to a value
other than $\mtx I$ to penalize unlikely values of $\mtx X$.
Candidates for the matrix $\mtx R$ include diagonal weighting matrices
as well as difference operators or high-pass filters constructed from
wavelet tight frames.  While previous
literature~\cite{Chai2011,Candes2013,Tian2012} have used the $\mtx
R=\mtx I$ special case to promote low-rank solutions, our goal here is
not to specifically promote low-rank solutions, but rather to promote
less noisy solutions and encode other \emph{a priori} information such
as smoothness or soft constraints on support.

Although many convex optimization methods can be applied to solve our
new model \eqref{eq:regularized00}, we present an efficient
first-order scheme tailored for this particular problem.  It is based
on the accelerated proximal gradient (APG) method~\cite{beck2009fast}
with adaptive restart~\cite{Odonoghue2015, su2014differential}, and we
introduce a new restart criterion so that under some mild conditions
on the measurement operators $\mtx K_m$s, we can prove that the
proposed algorithm is globally convergent---the generated sequence
converges to a global minimum of
\eqref{eq:regularized00}. Furthermore, we also propose a sufficient
condition for when our algorithm is provably \emph{linearly}
convergent. Numerical results show the advantage of the proposed model
with both simulated and experimental data, and the proposed numerical
algorithm also outperforms other state-of-the-art methods in terms of
computational efficiency for these data sets.

The rest of the paper is organized as follows: we explain the
principles guiding our trace-regularized approach in
\Cref{section:model}, give a numerical algorithm to solve the proposed
model in \Cref{section:algorithm}, analyze its convergence in
\Cref{section:convergence}, present numerical experiments for both
simulated and experimental data in \Cref{section:results}, and discuss
the results in \Cref{section:discussion}.

\section{The trace-regularized coherence retrieval model}
\label{section:model}

For notational brevity, we first summarize our proposed convex problem
for regularized coherence retrieval as follows:
\begin{equation}
\begin{aligned}
&\minimize[\mtx X] && 
\tfrac12\norm[\big]{\mathscr{A}\parens{\mtx X}-\vec b}_2^2 +
\mu\trace\parens[\big]{\mtxh R\mtx X}
&&&& \subjectto && \mtx X\in\PSD,
\end{aligned}
\label{eq:regularized}
\end{equation}
where linear operator $\mathscr{A}:\SN\rightarrow\realnumbers^M$ has
$m$th element $\A{\mtx X}[m]=\trace\parens[\big]{\mtxh K_m\mtx
  X}/\sigma_m$, and vector $\vec b\in\mathbb{R}^M$ has $m$th element
$\vec{b}[m]=y_m/\sigma_m$.  

Our choice of using regularizers of the form
$\trace\parens[\big]{\mtxh R\mtx X}$ derives from several motivations.
The first is that $\trace\parens[\big]{\mtxh R\mtx X}$ for positive
semidefinte $\mtx R$ corresponds to a physical quantity: the resulting
intensity if the source is channeled through an optical apparatus
defined by measurement operator $\mtx R$.  This interpretation allows
us to pose the problem as seeking the least physically energetic
solution that satisfies the measurements to an acceptable degree.
Minimizing the energy is a common approach in many inverse problems,
and this interpretation is more widely applicable than the
rank-minimization interpretation~\cite{Chai2011, Candes2013,
  Tian2012}---not many partially coherent fields are exactly low-rank
despite having decaying eigenvalues.

Furthermore, by not being restricted to setting $\mtx R=\mtx I$, we
enable some flexibility in encoding other \emph{a priori} information.
For example, we can encode the unequal likelihood of the spatial basis
functions by setting $\mtx R=\mtx W$, where $\mtx W$ is a diagonal
matrix whose entries give the relative negative log-likelihood of the
spatial basis functions.  This can be used to enforce a \emph{soft}
constraint on the spatial support of the solution (see for
example~\cite{LeBesnerais2008}) if the basis functions are spatially
localized, \emph{e.g.,} $\sinc$ functions.  Unlike a hard spatial
support constraint, this soft constraint allows us to embed
uncertainty into the solution---for example, imaging an aperture using
a lens will likely not result in a sharp aperture due to aberrations,
so forcing the solution to be zero outside the image of the aperture
is overly restrictive.

The well-studied Gaussian Schell-model source and their relatives (see
for example \cite{Starikov1982, Simon1993, Qiu2005, Sahin2012}) have
smooth underlying wave functions, and the statistics of natural images
also suggest a decay property in amplitude of the Fourier transform of
the intensity profile~\cite{Burton1987, Field1987, Tolhurst1992}.
These optical fields will be more likely to have lower energy content
at higher spatial frequencies, and this suggests setting $\mtx R$ in
such a way so that $\trace\parens[\big]{\mtxh R\mtx X}$ is more
sensitive to high frequency content.  We can do this by setting $\mtx
R=\mtx D=\mtxh H\mtx H$, where $\mtx H$ is defined such that the
continuous function $\vect\xi\parens{\vec p}\mtx H\vec u$ is equal to
$\vect\xi\parens{\vec p}\vec u$ with its high frequency components
boosted.  Therefore, $\trace\parens[\big]{\mtxh H\mtx H\mtx
  X}=\trace\parens[\big]{\mtx H\mtx X\mtxh H}$ is the trace of matrix
$\mtx X$ after boosting its high frequency components, resulting in a
higher penalty value for nonsmooth solutions. In this paper, we obtain
good results from the use of simple matrices for $\mtx H$ such as one
that extracts the high-pass component using the Haar wavelet, whereas
a more powerful and flexible approach would be to design $\mtx R$
based on the concept of high-pass filters of wavelet tight
frames~\cite{dong2010mra}, which has been shown to have a close
relationship to the difference operators~\cite{cai2012image}.

We now present a numerical example that shows how trace regularization
and choice of $\mtx R$ affects reconstruction quality in an idealized
coherence retrieval scenario wherein closed-form solutions exist, in
order to avoid complications due to algorithmic differences.  In this
example, we seek to reconstruct a one-dimensional Gaussian
Schell-model source~\cite{Starikov1982} with parameter $\beta=1$ from
simulated noisy measurements through an ideal apparatus, \emph{i.e,,}
the linear operator $\A{\cdot}$ has unit singular values and $\vec b$
is drawn from an i.i.d. Gaussian ensemble.  For simplicity, we chose a
spatial basis consisting of the first \num{32} Hermite functions
$\phi_n\parens{x}$ given in~\cite{Starikov1982}, and thus the ground
truth $\mtx X_\star\in\PSD$ is a diagonal $32\times32$ matrix whose
$n$th element is equal to $2^{n-1}\parens*{3+\sqrt{5}}^{1-n}$.  Note
that while $\mtx X_\star$ has decaying eigenvalues, it is not low
rank.  To see the effect of regularization, we consider the following
four closed-form solutions:
\begin{enumerate}
\item $\mtx X_{\text U}=\mtx X_\star+\sigma \mtx G$ is the solution to
  \eqref{eq:0} where positivity is ignored, \emph{i.e.,} where we
  replace $\PSD$ with $\SN$; $\sigma$ gives the noise level and $\mtx
  G$ is drawn from a Gaussian unitary ensemble.  We use this primarily
  as a point of reference for the other reconstruction approaches.
\item $\mtx X_{\text 0}=\Proj_{\PSD}\parens*{\mtx X_\star+\sigma\mtx
  G}$ is the solution to \eqref{eq:0}.
\item $\mtx X_{\text I}=\Proj_{\PSD}\parens*{\mtx X_\star+\sigma\mtx
  G-\mu\mtx I}$ is the solution to \eqref{eq:regularized} with $\mtx R=\mtx I$.
\item $\mtx X_{\text D}=\Proj_{\PSD}\parens*{\mtx X_\star+\sigma\mtx
  G-\mu\mtx D}$ is the solution to \eqref{eq:regularized} with $\mtx
  R=\mtx D=\mtxt {\hat D}\mtx {\hat D}$, where $\mtx
  {\hat D}$ acting on a discretized field $\vec u$ is equivalent to
  performing a derivative on the continuous quantity that $\vec u$
  represents.  This choice of $\mtx R$ is used to incorporate the idea
  that Gaussian Schell-modes are generally smooth, and hence contain
  less high frequency content.  With Hermite functions as the spatial
  basis, the $(i,j)$ entry of $\mtx{\hat D}$ is given by:
  \begin{equation*}
    \hat D_{i,j} = 
\begin{cases} 
-\sqrt{j+1}, & \text{if } i=j+1\\
\sqrt{j}, & \text{if } i=j-1\\
0, & \text{otherwise.}
\end{cases}
  \end{equation*}
\end{enumerate}
The noise level parameter $\sigma$ was allowed to take on \num{11}
different values exponentially equally spaced between
$10^{-3}\norm{\mtx X_\star}_\frobenius$ and $10^{-1}\norm{\mtx
  X_\star}_\frobenius$.  For each noise level, the regularization
parameter $\mu$ that minimized the average reconstruction error was
found using the bisection method, with different parameters for the
$\mtx R=\mtx I$ and $\mtx R=\mtx D$ cases.  The resulting spread of
reconstruction error across \num{256} realizations for each method and
noise level is shown in \cref{fig:regularizer:trends}.  While adding a
$\trace\parens{\mtx X}$ term to the optimization results in an
improvement in reconstruction quality over the unregularized result,
using $\trace\parens[\big]{\mtxh D\mtx X}$ results in even less error,
due to incorporating \emph{a priori} information about the smoothness
of the solution.  We also display a scatter of the reconstruction
error as a function of both $\trace\parens{\mtx X}$ and
$\trace\parens[\big]{\mtxh D\mtx X}$ in \cref{fig:scatter:trace}.
Note that the trace regularization terms and the reconstruction error
are positively correlated in the unregularized case; while desiring a
less energetic solution is a good physical rule of thumb, this
correlation could explain why it is good mathematically with further
study.

\begin{figure}[htbp]
\centering
\includegraphics{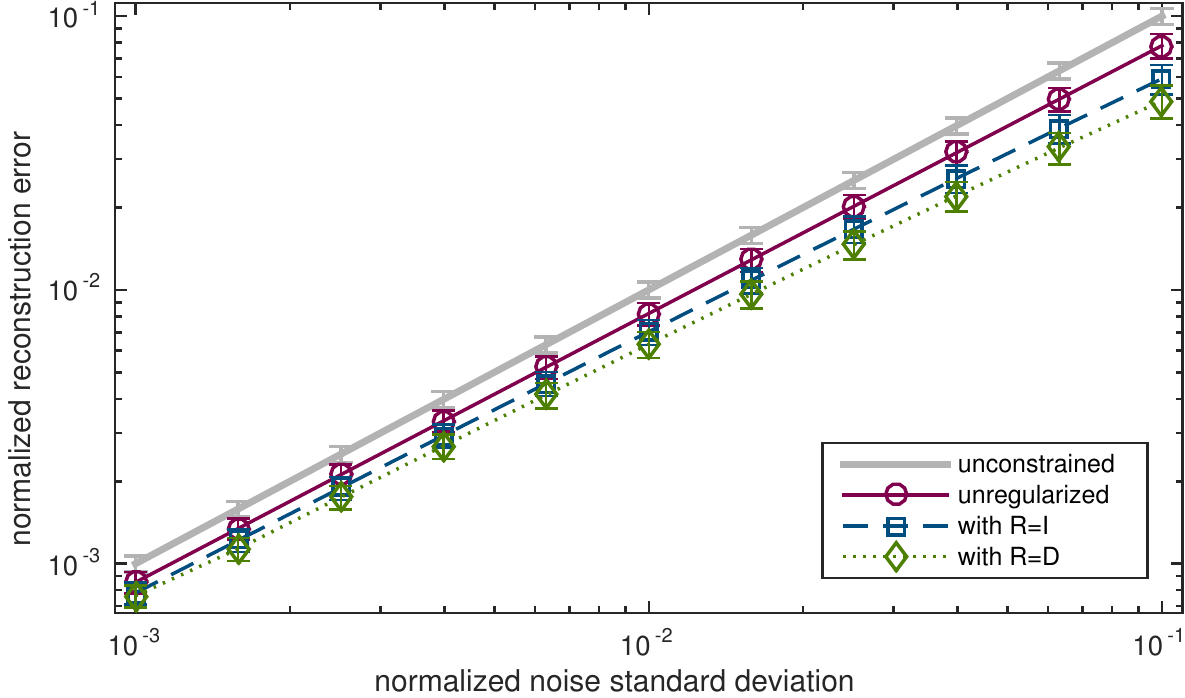}
\caption{Normalized reconstruction error $\norm{\mtx
    X-\mtx X_\star}_\frobenius/\norm{\mtx X_\star}$ as
  a function of the normalized noise level $\sigma/\norm{\mtx
    X_\star}$.  The mean is plotted for the four different solution
  methods, where $\mtx X$ is set to $\mtx X_{\text U}$,
  $\mtx X_{\text 0}$, $\mtx X_{\text I}$ and $\mtx X_{\text D}$ for the
  ``unconstrained'', ``unregularized'', ``with R=I'' and ``with R=D''
  cases, respectively.  Error bars indicate three standard deviations
  for each case.\label{fig:regularizer:trends}}
\end{figure}

\begin{figure}[htbp]
\centering
\includegraphics{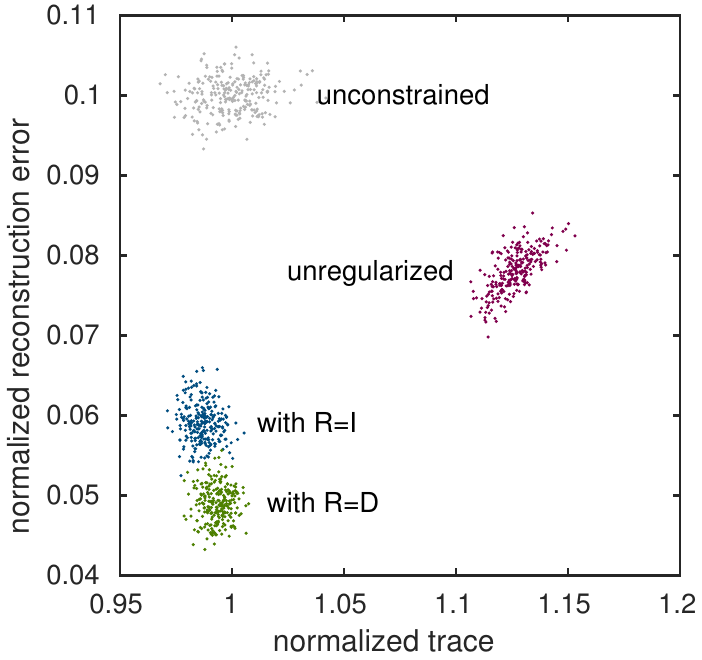}
\hfill
\includegraphics{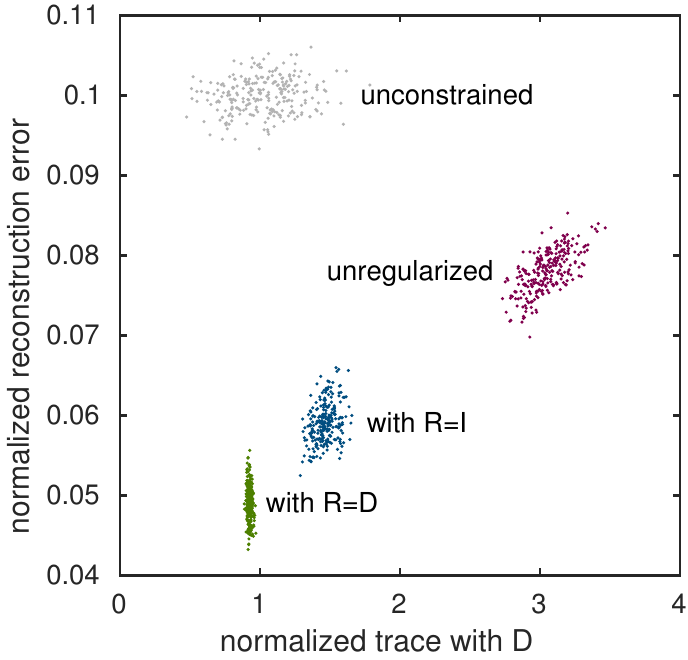}
\caption{Normalized reconstruction error $\norm{\mtx X-\mtx
    X_\star}_\frobenius/\norm{\mtx X_\star}$ versus normalized trace
  $\trace\parens{\mtx X}/\trace\parens{\mtx X_\star}$ (left) and
  $\trace\parens[\big]{\mtxh D\mtx X}/\trace\parens{\mtxh D\mtx
    X_\star}$ (right) for the four different reconstruction
  approaches, represented as a scatter plot showing each of the \num{256}
  realizations.\label{fig:scatter:trace}}
\end{figure}

\section{Numerical algorithm}

\label{section:algorithm}

We will now briefly review the accelerated proximal gradient (APG)
method \cite{beck2009fast} and then present a novel restart condition
for solving \eqref{eq:regularized}.

\subsection{The accelerated proximal gradient method}
Define 
\begin{equation}
f(\mtx X)  = \tfrac12\norm[\big]{\A{\mtx X}-\vec b}_2^2 + \mu\innerproduct{\mtx R}{\mtx X}, \quad
g(\mtx X)  = \iota_{\PSD}(\mtx X) \label{eq:fandg}
\end{equation}
where $\innerproduct{\mtx A}{\mtx B}=\trace\parens{\mtxh A\mtx B}$ is
the inner product and $\iota_{\PSD}\parens{\cdot}$ is the indicator
function for the set of positive semidefinite Hermitian matrices,
\emph{i.e.,} $\iota_{\PSD}(\mtx X) = 0$ if $\mtx X\in\PSD$ and
$\iota_{\PSD}(\mtx X) = +\infty$ if $\mtx X\notin \PSD$. Then,
equation~\eqref{eq:regularized} is equivalent to the composite
convex minimization problem
\begin{equation}
\minimize[\mtx X\in\SN]\ h(\mtx X) = f(\mtx X) + g(\mtx X)\text{.}
\label{eq:regularized_new}
\end{equation} 
One representative method to solve \eqref{eq:regularized_new} is the
so-called accelerated proximal gradient (APG) method
\cite{beck2009fast}. Given a function $g$ and $\alpha>0$, define the
proximal operator of $g$ as
\begin{equation*}\label{def:prox}
\Prox_{g}^{\alpha}\parens{\mtx X} = \argmin_{\mtx Y} \bigg\{g\parens{\mtx Y} + \tfrac{1}{2\alpha}\norm{\mtx Y - \mtx X}^2_\frobenius\bigg\}
\end{equation*}
where $\norm{\cdot}_\frobenius$ is the Frobenius norm.  Then, the APG
method constructs two sequences $\{\mtx X_k\}$ and $\{\mtx Y_k\}$ via
the following steps:
\begin{subequations}
	\begin{align}
	& t_k = \parens*{\sqrt{4(t_{k-1})^2+1}+1}/2,  \label{apg:1}\\
	& \mtx Y_k  =  \mtx X_k+\tfrac{t_{k-1}-1}{t_k}\parens{\mtx X_k-\mtx X_{k-1}}, \label{apg:2}\\
	& \mtx X_{k+1}  =  \Prox_{g}^{\alpha_k}\parens{\mtx Y_k - \alpha_k\nabla f(\mtx Y_k)}, \label{apg:3}
	\end{align}
\end{subequations}
where $t_0 = 1$, $0<\alpha_k\leq 1/L$, and $L$ is the maximal
eigenvalue of $\mathscr{A}^\HT\mathscr{A}$ with $\mathscr{A}^\HT$
being the adjoint operator for $\mathscr{A}$. With $\mathcal{X}_\star$
being the set of minimizers of \eqref{eq:regularized_new} and
$h_\star$ being the minimal value, it is well known that the APG
method has the following convergence property~\cite{beck2009fast}:
\begin{theorem}\label{thm:conv:apg}
	Let $\mtx X_k$ be the sequence generated by \eqref{apg:1}-\eqref{apg:3} and  $\alpha_k=1/L, \forall k$. Then for any $k\geq1$
	\begin{equation*}
	h(\mtx X_k) - h_\star\leq\frac{2L\dist(\mtx X_0,\mathcal{X}_\star)^2}{(k+1)^2},
	\end{equation*} 
	where $\dist(\mtx X,\mathcal{X}_\star)=\inf\{\norm{\mtx X-\mtx
          Y}_\frobenius:\mtx Y\in\mathcal{X}_\star\}$.
\end{theorem}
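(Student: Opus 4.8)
The statement is the standard $O(1/k^2)$ rate for FISTA, so my plan is to verify that the present setting meets the Beck--Teboulle hypotheses and then reproduce their argument. The ambient space $\SN$, equipped with the inner product $\innerproduct{\mtx A}{\mtx B}=\trace\parens{\mtxh A\mtx B}$, is a finite-dimensional \emph{real} Hilbert space; $f$ from \eqref{eq:fandg} is convex with gradient $\nabla f(\mtx X)=\mathscr{A}^\HT\parens{\A{\mtx X}-\vec b}+\mu\mtx R$, and since $\nabla f(\mtx X)-\nabla f(\mtx Y)=\mathscr{A}^\HT\mathscr{A}\parens{\mtx X-\mtx Y}$ the affine regularizer contributes nothing to the Lipschitz modulus, so $\nabla f$ is $L$-Lipschitz with exactly $L=\lambda_{\max}\parens{\mathscr{A}^\HT\mathscr{A}}$ as in the hypothesis; and $g=\iota_{\PSD}$ is proper, convex and lower semicontinuous, so $\Prox_g^{1/L}$ (the orthogonal projection onto $\PSD$) is well defined. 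We also use that $\mathcal{X}_\star$ is nonempty---automatic when $\mtx R\succ0$, since then $h$ is coercive on $\PSD$; if $\mathcal{X}_\star=\emptyset$ the bound is vacuous.

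The first ingredient is a one-step inequality. $L$-Lipschitzness of $\nabla f$ gives the descent bound $f(\mtx Z)\le f(\mtx Y)+\innerproduct{\nabla f(\mtx Y)}{\mtx Z-\mtx Y}+\tfrac L2\norm{\mtx Z-\mtx Y}_\frobenius^2$ for all $\mtx Y,\mtx Z\in\SN$. Writing $p(\mtx Y):=\Prox_g^{1/L}\parens{\mtx Y-\tfrac1L\nabla f(\mtx Y)}$, so that $\mtx X_{k+1}=p(\mtx Y_k)$, the point $\mtx Z=p(\mtx Y)$ minimizes the $L$-strongly convex surrogate $\mtx U\mapsto f(\mtx Y)+\innerproduct{\nabla f(\mtx Y)}{\mtx U-\mtx Y}+\tfrac L2\norm{\mtx U-\mtx Y}_\frobenius^2+g(\mtx U)$; combining its first-order optimality condition with the subgradient inequalities for the convex functions $f$ and $g$ at $\mtx Z$ produces, for \emph{every} $\mtx X\in\SN$,
\begin{equation*}
h(\mtx X)-h\parens{\mtx Z}\ \ge\ \tfrac L2\norm{\mtx Z-\mtx Y}_\frobenius^2+L\innerproduct{\mtx Y-\mtx X}{\mtx Z-\mtx Y}.
\end{equation*}

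The second ingredient is a telescoping potential. Fix $\mtx X_\star\in\mathcal{X}_\star$, put $v_k:=h(\mtx X_k)-h_\star\ge0$ and $\mtx U_k:=t_{k-1}\mtx X_k-(t_{k-1}-1)\mtx X_{k-1}-\mtx X_\star$. Applying the one-step inequality at each iteration with $\mtx Y=\mtx Y_k$, $\mtx Z=\mtx X_{k+1}$---once at $\mtx X=\mtx X_k$ and once at $\mtx X=\mtx X_\star$---and forming the combination with weights $t_k-1$ and $1$, then multiplying the momentum identity \eqref{apg:2} by the appropriate $t$ and expanding $\norm{\mtx U_{k+1}}_\frobenius^2-\norm{\mtx U_k}_\frobenius^2$, the cross terms cancel; here the recursion \eqref{apg:1}, which is precisely $t_k^2-t_k=t_{k-1}^2$, is exactly what makes the coefficients of $v_{k+1}$ and of $\norm{\mtx U_{k+1}}_\frobenius^2$ agree, yielding $\tfrac2L t_k^2 v_{k+1}+\norm{\mtx U_{k+1}}_\frobenius^2\le\tfrac2L t_{k-1}^2 v_k+\norm{\mtx U_k}_\frobenius^2$. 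Seeding the recursion with the first proximal-gradient step $\mtx X_1=p(\mtx X_0)$---for which the one-step inequality at $\mtx X=\mtx X_\star$ together with $t_0=1$ gives $\tfrac2L t_0^2 v_1+\norm{\mtx U_1}_\frobenius^2\le\norm{\mtx X_0-\mtx X_\star}_\frobenius^2$---and telescoping yields $\tfrac2L t_{k-1}^2 v_k\le\norm{\mtx X_0-\mtx X_\star}_\frobenius^2$. An induction from $t_0=1$ shows $t_{k-1}\ge(k+1)/2$, hence $h(\mtx X_k)-h_\star\le 2L\norm{\mtx X_0-\mtx X_\star}_\frobenius^2/(k+1)^2$; taking the infimum over $\mtx X_\star\in\mathcal{X}_\star$ turns the norm into $\dist(\mtx X_0,\mathcal{X}_\star)$, which is the claim.

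The analytic pieces---the descent lemma and the prox optimality condition---are routine. The step I expect to be the main obstacle is the potential telescoping: getting the scalar weights $(t_k-1,1)$, the multiplier on \eqref{apg:2}, and the identity $t_k^2-t_k=t_{k-1}^2$ to conspire so that the cross terms collapse into the clean two-term form above, and correctly initializing the recursion from the first, purely proximal-gradient, step---which also requires being careful about the paper's indexing convention ($t_0=1$), which is shifted by one relative to the standard Beck--Teboulle presentation.
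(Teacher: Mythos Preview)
Your proposal is a correct reproduction of the Beck--Teboulle argument, and the preliminary checks (that $f$ has $L$-Lipschitz gradient with $L=\lambda_{\max}(\mathscr A^\HT\mathscr A)$ independently of the affine term $\mu\mtx R$, and that $g=\iota_{\PSD}$ is proper convex lsc on the real Hilbert space $\SN$) are all valid. The paper itself gives \emph{no} proof of this theorem: it is stated as a known result and attributed to \cite{beck2009fast}, so there is nothing to compare against beyond the original FISTA analysis, which is exactly what you have outlined.
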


\subsection{The proposed algorithm} Indeed, the attractive $O(1/k^2)$ convergence rate of the APG method
is optimal when solving a convex problem when only first order
information is available~\cite{nesterov2013introductory}. However, it
has been observed that the objective value (the value of $f$ in this
case) sequence generated by the APG method shows oscillations, which
slows down convergence speed in practice~\cite{Odonoghue2015}.  To
avoid these oscillations, restarting techniques were proposed
in~\cite{Odonoghue2015} wherein $t_k$ is reset to $1$ when certain
criteria are met.  Despite the numerical success of this adaptive
restart APG method, it is still unknown whether this method is
guaranteed to converge.  We therefore propose a new restart criterion
that leads to a globally convergent numerical algorithm for solving
the trace-regularized coherence retrieval
problem~\eqref{eq:regularized}.  Given per-iteration step size
$\alpha_k>0$, define
\begin{equation}\label{prox:Z}
\mtx Z_{k+1} = \Prox_{g}^{\alpha_k}(\mtx Y_k - \alpha_k\nabla f(\mtx
Y_k)),\quad\mtx U_k = \mtx Y_k - \mtx Z_{k+1},\quad\mtx V_k =\mtx X_k - \mtx Z_{k+1}\text{.}
\end{equation}
When $\mtx X_k\neq \mtx Y_k$, we restart our APG method by setting
$t_k=1$ if the following does not hold:
\begin{equation}\label{res:grad}
\innerproduct{\mtx U_k}{\mtx V_k} - \alpha_k\innerproduct{\A{\mtx U_k}}{\A{\mtx V_k}}\geq \gamma \norm{\mtx V_k}_\frobenius^2
\end{equation}
where $\gamma>0$ is a small constant. We note that no extra computation is needed for checking
criterion \eqref{res:grad} due to the linearity of $\mathscr{A}$ and
the necessary quantities having been computed either during step size
estimation or in a previous iteration.  We summarize our proposed
adaptive APG algorithm in \cref{alg:adaAPG}.

\begin{remark}
$\Prox_g^\alpha\parens{\mtx
  X}$ is equal to a projection onto $\PSD$, independent of the value
of $\alpha$: 
\begin{equation*}
\Prox_g^\alpha\parens{\mtx X} = \Proj_{\PSD}\parens{\mtx X} =
\sum_{i=1}^N\nolimits \max\parens{\lambda_i,0} \vec q_i\vech q_i \text{,}
\end{equation*}
where $\set{\lambda_i}_i$ and $\set{\vec q_i}_i$ are the eigenvalues
and eigenvectors, respectively, of $\mtx X$.  The gradient of
quadratic function $f\parens{\mtx X}$ is equal to:
\begin{equation*}
\Grad f\parens{\mtx X} = \mathscr A^\HT\brackets[\big]{\mathscr A\parens*{\mtx X}-\vec
  b} + \mu\mtxh R\text{.}
\end{equation*}
In practice, the majority of computation is spent on evaluating
$\mathscr A$ and $\mathscr A^\HT$ as well as the eigenvalue
decomposition in the proximal operator.  Eigenvalue decomposition
scales asymptotically as $\mathcal{O}\parens*{N^3}$, whereas
evaluating $\mathscr A$ and $\mathscr A^\HT$ scales as
$\mathcal{O}\parens*{N^2M}$ in the worst case, with
$\mathcal{O}\parens*{N^3}$ being the best possible complexity via
reduced measurements~\cite{Chen2013} or exploiting the structure of $\mathscr
A$ using fast Fourier transforms or separable tensor
contractions~\cite{Shang2017}.
\end{remark}

\begin{algorithm}[ht]
	\caption{Adaptive APG algorithm}
	\label{alg:adaAPG}
	\begin{algorithmic}[1]
		\State Initialize $\mtx X_1=\mtx Y_1=\mtx Y_0=\mtx X_0$, $t_1=1$, $ k_{\text{max}}, k_{\text{maxres}}\in\mathbb{N}$, $k=1$, $k_{\text{res}}=0$, and  $\rho\in(0,1)$
		\While{$k\leq k_{\text{max}}$}
		\State Estimate step size $\alpha_k$ using
                \cref{alg:estStep} and obtain $\mtx Z_{k+1} =\Prox_{g}^{\alpha_k}(\mtx Y_k-\alpha_k\Grad f(\mtx Y_k))$.
		\If{\textbf{\{} $\mtx X_k=\mtx Y_k$ \textbf{or}
                  \eqref{res:grad} holds \textbf{\}} and
                  $k-k_{\text{res}}\leq k_{\text{maxres}}$ }
		\State Set $\mtx X_{k+1} = \mtx Z_{k+1}$.
		\State Set $t_{k+1} = \tfrac12\parens*{\sqrt{4t_k^2+1}+1}$.
		\State Set $\mtx Y_{k+1}  =  \mtx X_{k+1}+\frac{t_k-1}{t_{k+1}}(\mtx X_{k+1}-\mtx X_k)$.
		\State Set $k=k+1$.
		\Else
                \State Reset $t_k=1$ and update $k_{\text{res}}=k$.
                \State Set $\mtx Y_k=\mtx X_k$.
		\EndIf
		\EndWhile
	\end{algorithmic}
\end{algorithm}

\subsubsection{Step size estimation}
Setting $\alpha_k=1/L$ might be too conservative when the Lipschitz
constant $L$ is large. We would like to adaptively choose $\alpha_k$
by first initializing with the Barzilai-Borwein (BB) method
\cite{barzilai1988two}; our choice of the form with the squared norm
in the denominator is motivated by numerical considerations---an inner
product is sensitive to cancellation errors and placing it in the
denominator can result in numerical instability.  After
initialization, we then use a standard backtracking technique and
adopt the step size $\alpha_k$ at $\mtx Y_k$ whenever the following
inequality holds:
\begin{equation}
h(\mtx Y_k) - h(\mtx Z_{k+1})\geq\delta \norm{\mtx Y_{k}-\mtx Z_{k+1}}_\frobenius^2,\label{est:1}
\end{equation}
where $\delta>0$ is a small constant. It is noted that \eqref{est:1}
holds whenever $\alpha_k\leq1/(L+\delta)$ and hence backtracking must
terminate. In practice, we find that BB initialization gives a good
estimate for the step size and that backtracking is rare.  A detailed
listing for this step size estimation algorithm is given in
\cref{alg:estStep}.

\begin{algorithm}[ht]
	\caption{Estimation of step size $\alpha_k$}
	\label{alg:estStep}
	\begin{algorithmic}[1]
		\State {\bf Inputs:} $\mtx X_k$, $\mtx Y_k$, $\mtx Y_{k-1}$, $\Grad f(\mtx Y_k)$, $\Grad f(\mtx Y_{k-1})$, $\rho<1$ and $\alpha_{\text{min}},\alpha_{\text{max}}>0$.
		\State {\bf Outputs:} Step size $\alpha_k$, proximal
                point $\mtx Z_{k+1}$
                \If{k=1}
                \State Initialize $\beta=\norm{\vec
                  b-\mathscr{A}\parens{\mtx Y_k}}_\frobenius^2 /
                \norm*{\mathscr{A}^\HT\brackets*{\vec b-\mathscr{A}\parens{\mtx Y_k}}}_\frobenius^2$.
                \Else
		\State Calculate $\mtx S_k = \mtx Y_k-\mtx Y_{k-1}$ and $\mtx T_k = \nabla f(\mtx Y_k)-\nabla f(\mtx Y_{k-1})$.
		\State Initialize $\beta = \abs{\innerproduct{\mtx S_k}{\mtx T_k}}/\norm{\mtx T_k}_\frobenius^2$.
                \EndIf
		\For{$j=1,2\ldots$}
		\State Calculate $\mtx Z_{k+1} =
                \Prox_{g}^{\beta}\brackets*{\mtx Y_k - \beta\Grad
                  f(\mtx Y_k)}$.
                \If{\textbf{\{} $\mtx X_k\neq \mtx Y_k$ \textbf{and}
                    \eqref{res:grad} fails to hold \textbf{\}}
                  \textbf{or} \eqref{est:1} holds \textbf{or} $\beta<\alpha_{\text{min}}$}
		\State \textbf{break}
		\Else
		\State Backtrack $\beta = \rho\beta$.
		\EndIf 
		\EndFor
                \State Set $\alpha_k=\min\brackets*{\max(\alpha_{\text{min}},\beta),\alpha_{\text{max}}}$.
	\end{algorithmic}
\end{algorithm}

\section{Convergence analysis} 
\label{section:convergence}
In this section, we focus on convergence analysis for
\cref{alg:adaAPG} including an analysis regarding global convergence
as well as on the convergence rate. Before proceeding, we first
introduce some notation and definitions to be used in the analysis.

\subsection{Notation and definitions} 
Denote $\mathcal{X}_\star$ to be the solution set of
\eqref{eq:regularized}.  Given a point $\vec x$ and $\epsilon>0$, we
define $\mathbb{B}(\vec x,\epsilon)=\set{\vec y \ssuchthat \norm{\vec
    x-\vec y}_2\leq\epsilon}$.  Given a set $\mathcal{X}$, the
relative interior of $\mathcal{X}$, denoted by
$\ri\parens{\mathcal{X}}$, is defined as
\begin{equation*}
\ri\parens{\mathcal X}:=\set{\vec
	x\in\mathcal{X}\ssuchthat\exists\epsilon>0, \mathbb{B}(\vec
	x,\epsilon)\cap\aff\parens{\mathcal X}\subseteq\mathcal X},
\end{equation*}
where $\aff\parens{\mathcal{X}}$ is the affine hull of
$\mathcal{X}$. Given a set $\mathcal{X}$ and a member $\vec x$, the
distance from $\vec x$ to $\mathcal{X}$, denoted by $\dist\parens{\vec
  x,\mathcal{X}}$, is defined as $\dist\parens{\vec x,\mathcal X}=
\inf\set{\norm{\vec x-\vec y}_2\ssuchthat\vec y\in\mathcal{X}}$. Given
a function $f$ and $b\in\mathbb{R}$, we define the sub-level set
$\brackets*{f\parens{\vec x}\leq b}$ to be $\set{\vec x\ssuchthat
  f(\vec x)\leq b}$.  We use $\partial f$ to denote the (limiting)
subgradient of $f$.

Let $\mathcal A$ and $\mathcal B$ be finite dimensional Euclidean
spaces and $\Gamma:\mathcal A\rightrightarrows\mathcal B$ be a
\emph{set-valued mapping}. The \emph{graph}, \emph{domain} and
\emph{inverse} of $\Gamma$ are defined by
\begin{equation*}
\gph(\Gamma):=\set{(\vec u,\vec v)\ssuchthat\vec v\in\Gamma(\vec u)}\text{, }
\dom(\Gamma):=\set{\vec u\ssuchthat\Gamma(\vec u)\neq
	\emptyset}\text{, }
\Gamma^{-1}(v):=\set{\vec u\ssuchthat\Gamma(\vec u)=\vec v}.
\end{equation*}
In the following context, we define a useful property for set-valueed
mappings.
\begin{definition}\label{def:subregular} A set-valued mapping
	$\Gamma:\mathcal A\rightrightarrows\mathcal B$ is said to be
	metrically subregular at $\vec{\bar x}\in\mathcal A$ for $\vec{\bar
		y}\in\mathcal B$ if $(\vec{\bar x},\vec{\bar y})\in\gph(\Gamma)$ and there exist $\kappa,\epsilon>0$ such that
	\begin{equation}\label{eq:subregular}
	\dist(\vec x,\Gamma^{-1}(\vec{\bar y}))\leq \kappa
	\dist(\vec{\bar y}, \Gamma(\vec x)),\ \forall \vec x\in\mathbb{B}(\vec{\bar x},\epsilon).
	\end{equation}
\end{definition}

\subsection{Global convergence} 
\label{subsection:globalconvergence}
We first show that \cref{alg:adaAPG} converges to a global minimum,
provided the generated sequence is bounded.
\begin{theorem}
	Let $\{\mtx X_k\}$ be the sequence generated by
        \cref{alg:adaAPG}. If $\{\mtx X_k\}$ is bounded, then $\{\mtx
        X_k\}$ converges to a global minimum of
        \eqref{eq:regularized}, denoted as $\mtx{\bar X}$.
	\label{thm:convergence}
\end{theorem}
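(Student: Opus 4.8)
The plan is to treat \cref{alg:adaAPG} as an inexact proximal gradient scheme with a restart mechanism and to extract convergence from three facts: (i) the step-size estimation in \cref{alg:estStep} guarantees a sufficient decrease of $h$ along the accepted points, (ii) the restart criterion \eqref{res:grad} is exactly what is needed so that even in the accelerated (non-restarted) steps the new iterate does not increase the objective too much, and (iii) the boundedness assumption together with lower semicontinuity of $h=f+g$ keeps everything in a compact set. Concretely, I would first split the iteration index set into the ``restart'' iterations (where $t_k$ is reset and $\mtx Y_k=\mtx X_k$) and the ``acceleration'' iterations. On restart iterations, $\mtx Y_k=\mtx X_k$ and $\mtx X_{k+1}=\mtx Z_{k+1}=\Prox_g^{\alpha_k}(\mtx X_k-\alpha_k\nabla f(\mtx X_k))$, so \eqref{est:1} directly gives $h(\mtx X_k)-h(\mtx X_{k+1})\geq \delta\norm{\mtx X_k-\mtx X_{k+1}}_\frobenius^2$. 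On acceleration iterations, I would use \eqref{res:grad}: writing the prox optimality condition at $\mtx Z_{k+1}$ as $\tfrac1{\alpha_k}(\mtx Y_k-\mtx Z_{k+1})-\nabla f(\mtx Y_k)\in\partial g(\mtx Z_{k+1})$, convexity of $g$, and the quadratic identity for $f$, one obtains $h(\mtx X_k)-h(\mtx X_{k+1})\geq \tfrac1{\alpha_k}\brackets{\innerproduct{\mtx U_k}{\mtx V_k}-\alpha_k\innerproduct{\A{\mtx U_k}}{\A{\mtx V_k}}}\geq \tfrac{\gamma}{\alpha_k}\norm{\mtx V_k}_\frobenius^2$. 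Combining the two cases, $\{h(\mtx X_k)\}$ is nonincreasing, and since $h$ is bounded below, $\sum_k\norm{\mtx X_{k+1}-\mtx X_k}_\frobenius^2<\infty$ along restart steps and $\sum_k\norm{\mtx V_k}_\frobenius^2<\infty$ along acceleration steps; in particular $\norm{\mtx X_{k+1}-\mtx X_k}_\frobenius\to0$ and $\norm{\mtx V_k}_\frobenius\to0$.

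Next I would establish subsequential convergence to a stationary (hence, by convexity, global) minimizer. By boundedness there is a convergent subsequence $\mtx X_{k_j}\to\mtx{\bar X}$. Since $\alpha_k\in[\alpha_{\min},\alpha_{\max}]$ one may pass to a further subsequence with $\alpha_{k_j}\to\bar\alpha>0$. Using $\norm{\mtx X_{k+1}-\mtx X_k}_\frobenius\to0$ and $\norm{\mtx V_k}_\frobenius\to0$ one shows $\mtx Y_{k_j}\to\mtx{\bar X}$ and $\mtx Z_{k_j+1}\to\mtx{\bar X}$ as well (the extrapolation coefficient $(t_{k-1}-1)/t_k$ is bounded by $1$, and $\mtx X_k-\mtx Z_{k+1}=\mtx V_k\to0$, $\mtx Y_k-\mtx Z_{k+1}=\mtx U_k$, with $\norm{\mtx U_k}_\frobenius\le\norm{\mtx V_k}_\frobenius+\norm{\mtx X_k-\mtx Y_k}_\frobenius\to0$). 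Passing to the limit in the prox optimality condition $\tfrac1{\alpha_{k_j}}\mtx U_{k_j}-[\nabla f(\mtx Y_{k_j})-\nabla f(\mtx Z_{k_j+1})]\in \partial g(\mtx Z_{k_j+1})+\nabla f(\mtx Z_{k_j+1})=\partial h(\mtx Z_{k_j+1})$, and using outer semicontinuity of $\partial h$ together with continuity of $\nabla f$, we get $0\in\partial h(\mtx{\bar X})$, so $\mtx{\bar X}\in\mathcal X_\star$.

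Finally I would upgrade subsequential convergence to full-sequence convergence. Since $\{h(\mtx X_k)\}$ is nonincreasing and has $\mtx{\bar X}$ as a limit point of the iterates with $h$ continuous on $\PSD$, $h(\mtx X_k)\downarrow h_\star:=h(\mtx{\bar X})$. The cleanest route is a Fej\'er-type argument: for any fixed $\mtx X^\star\in\mathcal X_\star$, using the prox inequality and convexity of $h$ one derives a recursion of the form $\norm{\mtx X_{k+1}-\mtx X^\star}_\frobenius^2\le \norm{\mtx X_k-\mtx X^\star}_\frobenius^2 + (\text{error terms summable in }k)$ on restart steps, and a corresponding (slightly more delicate, because of extrapolation) bound on acceleration steps; since every limit point lies in $\mathcal X_\star$, quasi-Fej\'er monotonicity then forces $\mtx X_k\to\mtx{\bar X}$. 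I expect the main obstacle to be precisely this last step: handling the extrapolation term in the acceleration iterations so that the distance-to-solution sequence is quasi-Fej\'er monotone. The restart criterion \eqref{res:grad} is the key lever — it caps how much the Nesterov momentum can push the iterate in a bad direction relative to $\norm{\mtx V_k}_\frobenius^2$ — and the fact that at most $k_{\text{maxres}}$ consecutive accelerated steps are allowed before a forced restart should be used to bound the accumulated extrapolation error between consecutive restarts by a constant times $\sum\norm{\mtx V_k}_\frobenius^2$, which is summable. Once quasi-Fej\'er monotonicity is in hand, uniqueness of the limit point, hence convergence of the whole sequence, follows by the standard argument.
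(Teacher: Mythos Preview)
Your sufficient-decrease and subsequential-convergence steps are correct and essentially reproduce the paper's two preliminary lemmas: the case split between restart iterations (where \eqref{est:1} gives the decrease) and acceleration iterations (where \eqref{res:grad} gives it) is exactly the paper's argument for $h(\mtx X_k)-h(\mtx X_{k+1})\ge c\norm{\mtx X_k-\mtx X_{k+1}}_\frobenius^2$, and your limit argument for $0\in\partial h(\bar{\mtx X})$ is also the paper's. The gap is in your final step. On an acceleration iteration, expanding $\norm{\mtx Y_k-\mtx X^\star}_\frobenius^2$ with $\mtx Y_k=\mtx X_k+\omega_k(\mtx X_k-\mtx X_{k-1})$ introduces the cross term $2\omega_k\innerproduct{\mtx X_k-\mtx X^\star}{\mtx X_k-\mtx X_{k-1}}$, which is of order $\norm{\mtx X_k-\mtx X_{k-1}}_\frobenius$ since $\norm{\mtx X_k-\mtx X^\star}_\frobenius$ is merely bounded. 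Sufficient decrease gives $\sum_k\norm{\mtx X_k-\mtx X_{k-1}}_\frobenius^2<\infty$, not $\ell^1$ summability; the $k_{\text{maxres}}$ cap lets you Cauchy--Schwarz within a block to get $\sum_{k\in\text{block}}\norm{\mtx X_k-\mtx X_{k-1}}_\frobenius\le\sqrt{k_{\text{maxres}}}\bigl(\sum_{k\in\text{block}}\norm{\mtx X_k-\mtx X_{k-1}}_\frobenius^2\bigr)^{1/2}$, but summing these square roots over infinitely many blocks can diverge even when the squared sums are summable. So the accumulated extrapolation error is not ``a constant times $\sum\norm{\mtx V_k}_\frobenius^2$'' as you claim, and quasi-Fej\'er monotonicity does not follow.

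The paper closes the argument by a different mechanism. Alongside the sufficient decrease it proves a quantitative \emph{relative error} bound on the subgradient, $\dist(0,\partial h(\mtx X_{k+1}))\le b\bigl(\norm{\mtx X_{k+1}-\mtx X_k}_\frobenius+\bar w\,\norm{\mtx X_k-\mtx X_{k-1}}_\frobenius\bigr)$ with $\bar w<1$ --- this is where $k_{\text{maxres}}$ is actually used, to keep the momentum coefficient strictly below~$1$. Since $f$ and $g$ are semi-algebraic, $h$ is a Kurdyka--\L{}ojasiewicz function, and the standard KL descent machinery (as in Attouch--Bolte and Bolte--Sabach--Teboulle) combines these two inequalities to prove $\sum_k\norm{\mtx X_{k+1}-\mtx X_k}_\frobenius<\infty$ directly. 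Finite trajectory length gives Cauchy, hence full-sequence convergence, with no Fej\'er-type estimate needed.
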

\begin{proof}
	See Appendix~\ref{proof1}.
\end{proof}
\begin{remark}
The proof is based on the recently established Kurdyka-\L{}ojasiewicz
(KL) property~\cite{attouch2010proximal,bolte2014proximal}, which
provides a framework for typical descent algorithms whose generated
sequences are bounded. However, the boundedness condition might not
hold for semidefinite programming problems such as
\eqref{eq:regularized}.
\end{remark}
In coherence retrieval, we will show that the boundedness condition on
$\{\mtx X_k\}$ holds when the measurement operator $\mathscr{A}$
satisfies a certain mild condition. Let $\mtx A = \parens{\mtx{\hat
    K}_1,\ldots,\mtx{\hat K}_M}^\HT\in\complexmatrices{MN}{N}$ where
the $\mtx{\hat K}_m\text{s} \in \complexmatrices{N}{N}$ are such that $\mtx
K_m=\mtx{\hat K}_m\mtxh{\hat K}_m$ for all $m\in\set{1,\ldots,M}$.  We
make the following assumption on $\mtx A$:
\begin{assume}
	\label{ass:bound}
	The matrix $\mtx A$ has full column rank.
\end{assume}
In practice, \cref{ass:bound} always holds because
either the measurements are designed to ensure $\mtx A$ is full rank,
or a smaller set of basis functions are used to remove the ambiguity,
\emph{e.g.,} we can choose a larger sampling interval in the case of
$\sinc$ basis functions, or we can use the nonzero right singular
vectors of $\mtx A$ to set a new basis based on our original basis.
\begin{proposition}
	\label{prop:bound}
	Let $\{\mtx X_k\}$ to be the sequence generated by \cref{alg:adaAPG}. Suppose \cref{ass:bound} holds.  Then, $\{\mtx X_k\}$ is bounded.
\end{proposition}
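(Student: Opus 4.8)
\emph{Sketch of proof.} The plan is to combine two facts: \textbf{(i)} under \cref{ass:bound} the objective $h=f+g$ of \eqref{eq:regularized_new} is \emph{coercive} on $\SN$, so every sublevel set of $h$ is bounded; and \textbf{(ii)} \cref{alg:adaAPG} is monotone, i.e.\ $h(\mtx X_{k+1})\leq h(\mtx X_k)$ for all $k$, so $\{\mtx X_k\}$ lies in the fixed sublevel set $\set{\mtx X\ssuchthat h(\mtx X)\leq h(\mtx X_0)}$, which is bounded by (i).

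For (i), first note that $\mtxh A\mtx A=\sum_{m=1}^M\mtx{\hat K}_m\mtxh{\hat K}_m=\sum_{m=1}^M\mtx K_m$, so \cref{ass:bound} is equivalent to $\mtx K:=\sum_{m=1}^M\mtx K_m\succ0$; let $c:=\lambda_{\min}(\mtx K)>0$ and $\sigma_{\max}:=\max_m\sigma_m$. For any $\mtx X\in\PSD$ one has $\trace\parens{\mtx K\mtx X}\geq c\,\trace\parens{\mtx X}\geq0$, so by Cauchy--Schwarz $\bigl(\sum_m a_m^2\geq\tfrac1M(\sum_m a_m)^2\bigr)$ together with $\sum_m\innerproduct{\mtx K_m}{\mtx X}=\trace\parens{\mtx K\mtx X}$,
\begin{equation*}
\norm{\A{\mtx X}}_2^2=\sum_{m=1}^M\tfrac1{\sigma_m^2}\innerproduct{\mtx K_m}{\mtx X}^2\geq\tfrac1{M\sigma_{\max}^2}\trace\parens{\mtx K\mtx X}^2\geq\tfrac{c^2}{M\sigma_{\max}^2}\trace\parens{\mtx X}^2 .
\end{equation*}
Since $\trace\parens{\mtx X}=\sum_i\lambda_i(\mtx X)\geq\bigl(\sum_i\lambda_i(\mtx X)^2\bigr)^{1/2}=\norm{\mtx X}_\frobenius$ on $\PSD$, there is $\tau>0$ with $\norm{\A{\mtx X}}_2\geq\tau\norm{\mtx X}_\frobenius$ for all $\mtx X\in\PSD$. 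As $\mu\innerproduct{\mtx R}{\mtx X}\geq0$ on $\PSD$ too, whenever $\norm{\mtx X}_\frobenius\geq2\norm{\vec b}_2/\tau$ we get $h(\mtx X)\geq\tfrac12\bigl(\norm{\A{\mtx X}}_2-\norm{\vec b}_2\bigr)^2\geq\tfrac{\tau^2}{8}\norm{\mtx X}_\frobenius^2$, while $h(\mtx X)=+\infty$ for $\mtx X\notin\PSD$. Thus $h(\mtx X)\to+\infty$ as $\norm{\mtx X}_\frobenius\to\infty$, so all sublevel sets of $h$ are bounded.

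For (ii), $\mtx X_{k+1}$ is set in \cref{alg:adaAPG} only on an iteration where either $\mtx X_k=\mtx Y_k$ or \eqref{res:grad} holds (restarts leave $\mtx X_k$ unchanged), and always $\mtx X_{k+1}=\mtx Z_{k+1}=\Prox_g^{\alpha_k}\bigl(\mtx Y_k-\alpha_k\Grad f(\mtx Y_k)\bigr)$. If $\mtx X_k=\mtx Y_k$, the accepted step size satisfies \eqref{est:1}, giving $h(\mtx X_{k+1})\leq h(\mtx Y_k)=h(\mtx X_k)$. If instead \eqref{res:grad} holds, set $\mtx U_k=\mtx Y_k-\mtx Z_{k+1}$, $\mtx V_k=\mtx X_k-\mtx Z_{k+1}$. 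The optimality condition for the proximal step gives $\tfrac1{\alpha_k}\mtx U_k-\Grad f(\mtx Y_k)\in\partial g(\mtx Z_{k+1})$, so convexity of $g$ yields $g(\mtx Z_{k+1})\leq g(\mtx X_k)-\innerproduct{\tfrac1{\alpha_k}\mtx U_k-\Grad f(\mtx Y_k)}{\mtx V_k}$. Adding this to the exact expansion $f(\mtx Z_{k+1})=f(\mtx X_k)-\innerproduct{\Grad f(\mtx X_k)}{\mtx V_k}+\tfrac12\norm{\A{\mtx V_k}}_2^2$ of the quadratic $f$ about $\mtx X_k$, and using $\innerproduct{\Grad f(\mtx Y_k)-\Grad f(\mtx X_k)}{\mtx V_k}=\innerproduct{\A{\mtx U_k-\mtx V_k}}{\A{\mtx V_k}}$ to cancel the first-order terms, one obtains
\begin{equation*}
h(\mtx X_{k+1})\leq h(\mtx X_k)-\tfrac12\norm{\A{\mtx V_k}}_2^2-\tfrac1{\alpha_k}\bigl(\innerproduct{\mtx U_k}{\mtx V_k}-\alpha_k\innerproduct{\A{\mtx U_k}}{\A{\mtx V_k}}\bigr)\leq h(\mtx X_k)-\tfrac{\gamma}{\alpha_k}\norm{\mtx V_k}_\frobenius^2\leq h(\mtx X_k),
\end{equation*}
where the second inequality uses \eqref{res:grad}. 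Either way $h(\mtx X_{k+1})\leq h(\mtx X_k)$, so $h(\mtx X_k)\leq h(\mtx X_0)<\infty$ for all $k$ (taking $\mtx X_0\in\PSD$), and (i) finishes the proof.

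I expect the main obstacle to be the bookkeeping in (ii): deriving the descent estimate requires care in distinguishing gradients evaluated at $\mtx X_k$ versus $\mtx Y_k$, and one must verify that the step size $\alpha_k$ returned by the backtracking of \cref{alg:estStep} always makes \eqref{est:1} hold at an accepted iterate (this is its exit test, valid once $\alpha_{\min}\leq1/(L+\delta)$). The coercivity in (i) is the substantive ingredient, and it is a short consequence of \cref{ass:bound} once the latter is rewritten as $\sum_m\mtx K_m\succ0$.
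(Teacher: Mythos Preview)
Your argument is correct. Part (ii) is essentially the same descent computation the paper carries out in Case~II of \cref{lemma:1}: the paper packages it as $h(\mtx X_k)-h(\mtx X_{k+1})\geq\innerproduct{(\tfrac{1}{\alpha_k}\mathbf{I}-\mathscr{A}^\HT\mathscr{A})\mtx U_k}{\mtx V_k}$ by taking a subgradient of $h$ at $\mtx Z_{k+1}$, while you split into $f$ and $g$ and pick up an extra harmless $-\tfrac12\norm{\A{\mtx V_k}}_2^2$; the content is identical. Your caveat about $\alpha_{\min}\leq 1/(L+\delta)$ matches the paper's standing remark that \eqref{est:1} holds whenever $\alpha_k\leq 1/(L+\delta)$.

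Part (i), however, takes a genuinely different route. The paper factors $\mtx X=\mtx Y\mtxh Y$, passes from $\norm{\A{\mtx X}}_2$ to $\norm{\A{\mtx X}}_1$, and then uses the block structure $\A{\mtx Y\mtxh Y}=\mtx C\brackets*{(\mtx A\mtx Y)\hadamardprod(\mtx A\mtx Y)^\conj}\mtx 1$ together with full column rank of $\mtx A$ to bound $\norm{\mtx Y}_\frobenius^2$ and hence the nuclear norm of $\mtx X$. You instead observe that \cref{ass:bound} is equivalent to $\mtx K:=\sum_m\mtx K_m=\mtxh A\mtx A\succ0$, and then derive a direct lower bound $\norm{\A{\mtx X}}_2\geq\tau\norm{\mtx X}_\frobenius$ on $\PSD$ via $\trace(\mtx K\mtx X)\geq\lambda_{\min}(\mtx K)\trace(\mtx X)$ and the power-mean inequality on the nonnegative numbers $\innerproduct{\mtx K_m}{\mtx X}$. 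Your route is arguably cleaner: it avoids the factorization and the block-matrix manipulation, and it makes transparent that the only role of \cref{ass:bound} is to render $\sum_m\mtx K_m$ strictly positive definite. The paper's route, on the other hand, gives an explicit nuclear-norm bound directly in terms of $a_{\min}$, the smallest singular value of $\mtx A$, which is the quantity reported in the numerical section.
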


\begin{proof}
	See Appendix~\ref{p2}.
\end{proof}
Combining \cref{thm:convergence} and \cref{prop:bound}, we obtain that
\cref{alg:adaAPG} is globally convergent under \cref{ass:bound}.
\begin{corollary}
Let $\{\mtx X_k\}$ to be the sequence generated by
\cref{alg:adaAPG}. Suppose \cref{ass:bound} holds.  Then, $\{\mtx
X_k\}$ converges to a global minimum of \eqref{eq:regularized},
denoted as $\mtx{\bar X}$.
\end{corollary}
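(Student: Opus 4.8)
The plan is to obtain the corollary directly by composing the two results that immediately precede it, so no new analytic argument is needed beyond checking that their hypotheses and conclusions link up cleanly. First I would note that \cref{ass:bound} is assumed to hold, which is exactly the standing hypothesis of \cref{prop:bound}; invoking that proposition yields that the sequence $\{\mtx X_k\}$ generated by \cref{alg:adaAPG} is bounded. Second, I would feed this boundedness into \cref{thm:convergence}, whose sole hypothesis is precisely that $\{\mtx X_k\}$ be bounded; its conclusion then gives that $\{\mtx X_k\}$ converges to a global minimum $\mtx{\bar X}$ of \eqref{eq:regularized}. Chaining these two implications completes the argument.

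Because the corollary is a pure composition of two already-established statements, there is no genuine obstacle at this level. The only point worth confirming is that the two lemmas speak about a common object and a common notion of boundedness: both \cref{prop:bound} and \cref{thm:convergence} refer to $\{\mtx X_k\}$ as the sequence generated by \cref{alg:adaAPG}, and in both places boundedness is understood in the Frobenius norm on $\SN$, so the output of \cref{prop:bound} matches the input required by \cref{thm:convergence} verbatim. All of the substantive work is deferred to those two results, which I may assume: \cref{prop:bound} is where \cref{ass:bound} (the full column rank of $\mtx A$) is exploited to derive a uniform bound on the trace, and hence on the Frobenius norm, of the positive semidefinite iterates, while \cref{thm:convergence} is where the Kurdyka--\L{}ojasiewicz framework converts boundedness into convergence toward a critical point, which for the convex problem \eqref{eq:regularized} is necessarily a global minimizer.
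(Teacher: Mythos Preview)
Your proposal is correct and matches the paper's own argument exactly: the corollary is obtained simply by combining \cref{prop:bound} (which uses \cref{ass:bound} to get boundedness of $\{\mtx X_k\}$) with \cref{thm:convergence} (which turns boundedness into convergence to a global minimum). There is nothing further to add.
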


\subsection{Convergence rate analysis}
In this section, we first impose a reasonable condition on the
solution set $\mathcal{X}_\star$. Using this condition, we prove that
\cref{alg:adaAPG} converges linearly.
\begin{assume}
\label{ass:rate}
We make the following assumptions on $\mathcal{X}_\star$: a)
$\mathcal{X}_\star\neq\emptyset$.  b) There exists an $\mtx
X_\star\in\mathcal{X}_\star$ satisfying
\begin{equation}\label{sub:ass}
\mtx 0\in\Grad f(\mtx X_\star)+\ri\parens*{\mathcal{N}_{\PSD}(\mtx X_\star)}
\end{equation}
where $\mathcal{N}_{\PSD}\parens{\mtx X}$ denotes the normal cone of
$\PSD$ at $\mtx X$.
\end{assume}
\begin{remark}
	The first order optimality condition of $h$ implies
	\begin{equation}\label{eq:fo}
	0\in\Grad f(\mtx X)+\mathcal{N}_{\PSD}(\mtx X),\ \forall \mtx X\in\mathcal{X}_\star.
	\end{equation}
	Condition \eqref{sub:ass} is slightly more restrictive than
        \eqref{eq:fo}, but \eqref{sub:ass} only needs to hold at one
        point of $\mathcal{X}_\star$. Moreover, from the proof of
        \cref{prop:sub}, one sufficient condition for ensuring
        \eqref{sub:ass} is that there exists some $\mtx
        X_\star\in\mathcal{X}_\star$ such that $\rank(\mtx
        X_\star)=N$, \emph{i.e.,} $\mtx X_\star$ is full rank.
\end{remark}
Based on recent work~\cite{cui2016asymptotic}, we ensure $\partial h$
is metrically sub-regular at any $\bar{\mtx X}\in\mathcal X_\star$ for
$\mtx 0$ in the next proposition.
\begin{proposition}\label{prop:sub}
Let $h=f+g$ be defined in \eqref{eq:fandg}. Suppose \cref{ass:rate}
holds.  Then, for any $\bar{\mtx X}\in\mathcal{X}_\star$, $\partial h$ is
metrically sub-regular at $\bar{\mtx X}$ for $0$.
\end{proposition}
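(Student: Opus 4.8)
The plan is to reduce the claimed metric sub-regularity (in the sense of \cref{def:subregular}) to a quadratic growth condition for $h$ near $\bar{\mtx X}$, and then to obtain that growth condition from the second-order theory of convex conic optimization. Since $f$ is a convex quadratic and $g=\iota_{\PSD}$ is proper, closed and convex, $h=f+g$ is proper, closed and convex; moreover \cref{ass:rate}(a) together with the optimality condition \eqref{eq:fo} gives $(\partial h)^{-1}(\mtx 0)=\mathcal{X}_\star\neq\emptyset$. For such $h$ it is standard that, for a fixed $\bar{\mtx X}\in\mathcal{X}_\star$, $\partial h$ is metrically sub-regular at $\bar{\mtx X}$ for $\mtx 0$ if and only if $h$ obeys the \emph{quadratic growth condition} at $\bar{\mtx X}$, i.e.\ there exist $\kappa_0,\epsilon>0$ with
\[
h(\mtx X)-h_\star\ \geq\ \kappa_0\,\dist\parens{\mtx X,\mathcal{X}_\star}^2\qquad\text{for all }\mtx X\in\mathbb{B}(\bar{\mtx X},\epsilon);
\]
see the discussion of error bounds and quadratic growth for convex functions in \cite{cui2016asymptotic} and the references therein. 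It therefore suffices to establish this quadratic growth bound at every $\bar{\mtx X}\in\mathcal{X}_\star$.

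Next I would invoke the conic second-order theory. The semidefinite cone $\PSD$ is $C^2$-cone reducible (a classical property), and $f$ is $C^2$ with affine gradient $\nabla f(\mtx X)=\mathscr{A}^\HT\brackets{\mathscr{A}(\mtx X)-\vec b}+\mu\mtxh R$, so \eqref{eq:regularized} is a convex quadratic semidefinite program falling within the scope of \cite{cui2016asymptotic}. The decisive observation is that \cref{ass:rate}(b), i.e.\ \eqref{sub:ass}, is precisely the \emph{strict complementarity} condition for this program: setting $\mtx S_\star:=-\nabla f(\mtx X_\star)$, we have $\mathcal{N}_{\PSD}(\mtx X_\star)=\set{-\mtx S\ssuchthat\mtx S\succeq\mtx 0,\ \mtx S\mtx X_\star=\mtx 0}$, and a point $-\mtx S$ of this cone lies in its relative interior exactly when $\mtx S$ attains the maximal rank $N-\rank\mtx X_\star$; hence \eqref{sub:ass} says that $\mtx S_\star\succeq\mtx 0$, $\mtx S_\star\mtx X_\star=\mtx 0$ and $\rank\mtx S_\star+\rank\mtx X_\star=N$. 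Under this condition, the results of \cite{cui2016asymptotic} guarantee that the (no-gap) second-order sufficient condition holds at every optimal solution, which in turn yields the quadratic growth condition for $h$ at every $\bar{\mtx X}\in\mathcal{X}_\star$. Combined with the equivalence of the previous paragraph, this proves the proposition.

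Along the way I would also record the sufficient condition noted after \cref{ass:rate}: if some $\mtx X_\star\in\mathcal{X}_\star$ has $\rank\mtx X_\star=N$, then $\mtx X_\star$ lies in the interior of $\PSD$, so $\mathcal{N}_{\PSD}(\mtx X_\star)=\set{\mtx 0}$ and hence $\ri\parens{\mathcal{N}_{\PSD}(\mtx X_\star)}=\set{\mtx 0}$; the optimality condition \eqref{eq:fo} then forces $\nabla f(\mtx X_\star)=\mtx 0$, so \eqref{sub:ass} holds trivially. The step I expect to be the main obstacle is making rigorous the identification of our relative-interior condition \eqref{sub:ass} with the hypotheses under which \cite{cui2016asymptotic} proves the second-order sufficient condition, and checking that strict complementarity at a single solution propagates the second-order sufficient condition---and hence quadratic growth---to all of $\mathcal{X}_\star$; by contrast, the $C^2$-cone reducibility of $\PSD$ and the affineness of $\nabla f$ are routine.
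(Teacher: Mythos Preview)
Your proposal is correct and rests on the same machinery as the paper: both identify \eqref{sub:ass} as strict complementarity (the rank condition $\rank\mtx X_\star+\rank\mtx S_\star=N$ with $\mtx S_\star=\nabla f(\mtx X_\star)$) and then invoke \cite{cui2016asymptotic} to conclude metric sub-regularity of $\partial h$ at every point of $\mathcal{X}_\star$.

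The presentation differs slightly. The paper writes the dual problem explicitly, observes that it has a \emph{unique} solution $(\bar{\vec w},\bar{\mtx S})$ by strong convexity in $\vec w$, identifies $\bar{\mtx S}=\nabla f(\mtx X_\star)$ from the KKT system, gets the rank condition via Proposition~3.2 of \cite{cui2016asymptotic}, and then applies Corollary~3.1 of \cite{cui2016asymptotic} directly to obtain metric sub-regularity at every $\bar{\mtx X}\in\mathcal{X}_\star$. You instead pass through the equivalence of metric sub-regularity with the quadratic growth condition and obtain the latter from the second-order sufficient condition under strict complementarity. These are two faces of the same result in \cite{cui2016asymptotic}; the paper's route is shorter because the corollary packages exactly what is needed, and the uniqueness of the dual solution handles cleanly the point you flagged as an obstacle---strict complementarity at one $\mtx X_\star$ automatically involves the \emph{same} multiplier $\bar{\mtx S}$ for all primal optima, so the conclusion propagates to all of $\mathcal{X}_\star$ without further argument.
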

\begin{proof}
	See \cref{p1}.
\end{proof}

\begin{remark}
	Proposition~3.2 in~\cite{cui2016asymptotic} is a
        characterization of metric sub-regularity for real symmetric
        positive semidefinite matrices. However, the analysis
        in~\cite{cui2016asymptotic} can easily be extended for
        Hermitian positive semidefinite matrices over $\mathbb{R}$.
\end{remark}

Now, we can establish local linear convergence for
\cref{alg:adaAPG} via the following:
\begin{theorem}\label{thm:linear}
	Let $h=f+g$ be defined in \eqref{eq:fandg} and the sequence $\{\mtx
        X_k\}$ be generated by \cref{alg:adaAPG}. Suppose \cref{ass:bound} and \cref{ass:rate} holds. Then, there exists some $\bar{\mtx
          X}\in\mathcal{X}_\star$ such that one of the following assertions
        holds:\\
		1. $\set{\mtx X_k}$ converges to $\bar{\mtx X}$ in finite steps.\\
		2. $\set{h(\mtx X_k)}$ and $\set{\mtx X_k}$ linearly
                converge to $h(\bar{\mtx X})$ and $\bar{\mtx X}$, respectively, \emph{i.e.,} there exist $c_1,c_2>0$, $w_1,w_2\in(0,1)$ and $k_\ell>0$ such that 
		\begin{equation*}
		h(\mtx X_k)-h(\bar{\mtx X})\leq c_1w_1^k\text{, and }\norm{\mtx X_k-\bar{\mtx X}}_\frobenius\leq c_2w_2^k,\ \forall k>k_\ell.
		\end{equation*}
\end{theorem}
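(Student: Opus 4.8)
The plan is to combine three ingredients: (i) global convergence from \Cref{thm:convergence} and \Cref{prop:bound}, which under \cref{ass:bound} gives $\mtx X_k\to\bar{\mtx X}$ for some $\bar{\mtx X}\in\mathcal X_\star$; (ii) metric subregularity of $\partial h$ at $\bar{\mtx X}$ for $0$, supplied by \cref{prop:sub} under \cref{ass:rate}; and (iii) a local descent/error-bound argument that upgrades subregularity into a linear rate for the specific iteration produced by \cref{alg:adaAPG}. The key structural fact to exploit is the restart criterion \eqref{res:grad}, which is exactly engineered so that the algorithm is, after each restart, a monotone proximal-gradient-type descent method with a sufficient-decrease property: together with \eqref{est:1} this yields $h(\mtx Y_k)-h(\mtx Z_{k+1})\ge\delta\norm{\mtx Y_k-\mtx Z_{k+1}}_\frobenius^2$ whenever $\mtx X_{k+1}=\mtx Z_{k+1}$, and a companion lower bound controlling $\dist(0,\partial h(\mtx X_{k+1}))$ in terms of $\norm{\mtx U_k}_\frobenius$ (and hence $\norm{\mtx V_k}_\frobenius$ via \eqref{res:grad}) using the optimality condition $-\nabla f(\mtx Y_k)+\tfrac1{\alpha_k}(\mtx Y_k-\mtx Z_{k+1})\in\partial g(\mtx Z_{k+1})$.

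First I would localize: since $\mtx X_k\to\bar{\mtx X}$, all iterates eventually lie in the ball $\mathbb B(\bar{\mtx X},\epsilon)$ on which \eqref{eq:subregular} holds with modulus $\kappa$. Second, I would set up the standard KL-type/error-bound recursion. Let $r_k=h(\mtx X_k)-h(\bar{\mtx X})\ge 0$; because $h$ is convex with $\bar{\mtx X}$ a minimizer, $r_k\le\langle s_k,\mtx X_k-\bar{\mtx X}\rangle\le\norm{s_k}_\frobenius\,\dist(\mtx X_k,\mathcal X_\star)$ for any $s_k\in\partial h(\mtx X_k)$, and by subregularity $\dist(\mtx X_k,\mathcal X_\star)\le\kappa\,\dist(0,\partial h(\mtx X_k))\le\kappa\norm{s_k}_\frobenius$. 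Choosing the particular subgradient coming from the proximal step gives $\norm{s_k}_\frobenius\le C\norm{\mtx X_k-\mtx X_{k-1}}_\frobenius$ (using boundedness of $\alpha_k\in[\alpha_{\min},\alpha_{\max}]$ and Lipschitzness of $\nabla f$, plus the momentum update \eqref{apg:2} to bound $\norm{\mtx Y_{k-1}-\mtx X_k}$). Combined with sufficient decrease $r_{k-1}-r_k\ge\delta\norm{\mtx X_k-\mtx X_{k-1}}_\frobenius^2$ (valid on accepted, non-restart steps; restart steps only help since they also decrease $h$ or terminate), one obtains $r_k^2\le C'\kappa^2(r_{k-1}-r_k)$, a recursion that forces $r_k\le c_1 w_1^k$ for some $w_1\in(0,1)$. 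The iterate bound then follows: $\norm{\mtx X_k-\mtx X_{k-1}}_\frobenius^2\le(r_{k-1}-r_k)/\delta\le r_{k-1}/\delta$ decays geometrically, so $\{\mtx X_k\}$ is Cauchy with geometric tail, giving $\norm{\mtx X_k-\bar{\mtx X}}_\frobenius\le c_2 w_2^k$. If instead $r_{k}=0$ from some step on, the iterates coincide with $\bar{\mtx X}$ in finitely many steps, which is alternative 1.

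The main obstacle is the bookkeeping around the restarts and the adaptive step size: one must verify that a restart cannot destroy the decrease (the algorithm resets $t_k=1$, sets $\mtx Y_k=\mtx X_k$, and on the next iteration $\mtx X_{k+1}=\mtx Z_{k+1}=\Prox_g^{\alpha_k}(\mtx X_k-\alpha_k\nabla f(\mtx X_k))$ is an honest proximal-gradient step, hence still satisfies the sufficient-decrease \eqref{est:1} with $\mtx Y_k=\mtx X_k$), and that infinitely many non-restart accepted steps occur on the tail — otherwise convergence analysis reduces to plain proximal gradient on a subsequence, which is easier. A second subtlety is relating $\norm{\mtx V_k}_\frobenius$ to $\norm{\mtx X_k-\mtx X_{k-1}}_\frobenius$: this uses \eqref{res:grad} to say $\gamma\norm{\mtx V_k}_\frobenius^2\le\langle\mtx U_k,\mtx V_k\rangle-\alpha_k\langle\A{\mtx U_k}\rangle{\A{\mtx V_k}}\le\norm{\mtx U_k}_\frobenius\norm{\mtx V_k}_\frobenius$ so $\norm{\mtx V_k}_\frobenius\le\gamma^{-1}\norm{\mtx U_k}_\frobenius$, and $\mtx U_k-\mtx V_k=\mtx Y_k-\mtx X_k=\tfrac{t_{k-1}-1}{t_k}(\mtx X_k-\mtx X_{k-1})$ is controlled by the previous iterate gap since $t_{k-1}/t_k\le 1$; then $\norm{\mtx X_{k+1}-\mtx X_k}_\frobenius=\norm{\mtx V_k}_\frobenius$ closes the loop. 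I expect the rest — convexity inequalities, Lipschitz bounds, and the quadratic-recursion lemma $r_k^2\le a(r_{k-1}-r_k)\Rightarrow r_k\le cw^k$ — to be routine and citable from \cite{attouch2010proximal,bolte2014proximal,cui2016asymptotic}.
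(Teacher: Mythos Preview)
Your approach is essentially the paper's: combine global convergence (\cref{thm:convergence}, \cref{prop:bound}), convert metric subregularity (\cref{prop:sub}) into the KL-type bound $r_k\le\kappa\,\dist(0,\partial h(\mtx X_k))^2$, feed in the sufficient-decrease and subgradient estimates, and extract a contraction for $r_k=h(\mtx X_k)-h(\bar{\mtx X})$. The paper packages all of the restart bookkeeping you worry about into Lemma~A.1 (inequalities \eqref{le:suff}--\eqref{le:diff}), whose two cases ($\mtx X_k=\mtx Y_k$ versus \eqref{res:grad} holding) already cover both restart and non-restart iterations uniformly; you do not need the separate $\norm{\mtx V_k}_\frobenius\le\gamma^{-1}\norm{\mtx U_k}_\frobenius$ manipulation.

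Two concrete slips to fix. First, the subgradient bound is \emph{not} single-term: because the proximal step is taken at $\mtx Y_{k-1}$, the optimality condition gives
\[
\dist\bigl(0,\partial h(\mtx X_k)\bigr)\le b\bigl(\norm{\mtx X_k-\mtx X_{k-1}}_\frobenius+\bar w\,\norm{\mtx X_{k-1}-\mtx X_{k-2}}_\frobenius\bigr),\qquad \bar w=\tfrac{t_{k-2}-1}{t_{k-1}}<1,
\]
so the recursion for $r_k$ involves three consecutive terms; the paper then uses monotonicity of $r_k$ to collapse it to $r_k\le\tfrac{c}{1+c}\,r_{k-2}$. Second, and more seriously, your displayed recursion $r_k^2\le C'\kappa^2(r_{k-1}-r_k)$ is wrong on two counts: (i) your own chain $r_k\le\kappa\norm{s_k}_\frobenius^2\le C\norm{\mtx X_k-\mtx X_{k-1}}_\frobenius^2\le C'(r_{k-1}-r_k)$ produces $r_k$, not $r_k^2$, on the left; and (ii) the ``lemma'' you invoke, $r_k^2\le a(r_{k-1}-r_k)\Rightarrow r_k\le cw^k$, is false---take $r_k=1/k$, which satisfies the hypothesis but decays only polynomially. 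The correct (linear) recursion $r_k\le C'(r_{k-1}-r_k)$ immediately gives $r_k\le\tfrac{C'}{1+C'}r_{k-1}$, which is what you want.
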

\begin{proof}
	See \cref{proof2}.
\end{proof}

\section{Results}

\label{section:results}

We now apply our algorithm to two data sets from translation-only
one-dimensional phase-space tomography~\cite{Rydberg2007, Tian2012}.
One is a simulation with realistic noise of two coherent Gaussian
beams that are slightly decohered with respect to each other.  Another
is experimental data from \cite{Zhang2013} consisting of a
Schell-model source imaged by a single positive lens.  In both cases,
the data consists of intensity profiles captured at
\SI{250}{\micro\meter} axial intervals by a single row of pixels in a
\SI{3.2}{\micro\meter} pitch camera with wavelength $\lambda$ equal to
\SI{532}{\nano\meter}.

While our approach can be applied to more complicated tomographic
apparatuses~\cite{McAlister1995,Marks2000} as long as the amplitude
transfer function $K\parens{\vec p,\vec q_m}$ is known, we focus on
the translation-only one-dimensional phase-space tomography example as
it is well-studied and a good base from which to extrapolate insights.

With one-dimensional phase-space tomography, the optical field as well
as the mutual intensity are assumed to be constant along one of the
spatial axes, \emph{i.e.,} $U\parens{\vec p}$ can be written as a
one-dimensional function $U\parens{x}$, being constant along $y$, and
thus $J\parens{\vec p_1,\vec p_2}$ can also be written as
$J\parens{x_1,x_2}$.  We assume that the optical field can be Nyquist
sampled at intervals of $\Delta$ and has negligible energy when
$\abs{x}>N\Delta/2$.  Thus, we employ the following $\sinc$ basis
functions:
\begin{equation*}
\xi_n\parens{x} = \sqrt{\Delta}\sin\brackets[\big]{\pi
  \parens{x/\Delta-n+n_0}}\big/\brackets[\big]{\pi
  \parens{x/\Delta-n+n_0}}
\end{equation*} 
where $n=1,\dots,N$ and $n_0=(N+1)/2$.  

For translation-only one-dimensional phase-space tomography, the light
from the source propagates through free space towards the measurement
points $\vec r_m$, each of which can be fully specified by transverse
position $x_m$ and axial position $z_m$; a camera sensor is translated
to the various axial positions and intensity measurements are obtained
from the pixel values in a single row.  We assume $\Delta\gg\lambda$
and thus use the Fresnel diffraction integral to compute the forward
model $\vec k_m$s and thus the measurement operators $\mtx K_m$s:
\begin{equation*}
  \begin{aligned}
    \vec k_{m}[n] &= \int \xi_n\parens{x}
    \exp\brackets*{\frac{\ii2\pi}{\lambda
        z_m}\parens{x_m-x}^2}\Big/\sqrt{\ii\lambda z_m} \dd x\\ &=
    \frac{\exp\parens*{\alpha_{m,n}}}{2\sqrt{\ii\lambda
        z_m/\Delta}}\braces*{\erf\brackets*{-\sqrt{\alpha_{m,n}}+\frac{\sqrt{\ii\pi\lambda
            z_m}}{2\Delta}}-\erf\brackets*{-\sqrt{\alpha_{m,n}}-\frac{\sqrt{\ii\pi\lambda
            z_m}}{2\Delta}}},
  \end{aligned}
\end{equation*}
where $\ii$ is the imaginary constant,
$\sqrt\ii=\parens{1+\ii}/\sqrt{2}$,
$\erf\parens{\zeta}=\tfrac{2}{\sqrt\pi}\int_0^\zeta\exp\parens{-t^2}\dd
t$ is the error function, and $\alpha_{m,n} =
\ii\pi\parens{x_m-n\Delta}^2/\parens{\lambda z_m}$.  Since these $\vec
k_m$s are deterministic, $\mtx K_m=\vect k_m\vec k_m^\conj$.

The constant parameters we chose for our numerical algorithm were:
\begin{equation*}
\begin{aligned}
&\mtx X_0=\mtx 0&&\delta=10^{-8}&&\gamma=10^{-5}&&\rho=\tfrac12\\
&\alpha_{\text{min}}=10^{-8}&&\alpha_{\text{max}}=10^8&&k_{\text{max}}=1000&&k_{\text{maxres}}=250
\end{aligned}
\end{equation*}
For comparison, we also run standard proximal gradients (PG),
accelerated proximal gradients (APG) and adaptive restart accelerated
proximal gradients (restart APG) using publicly available code
accompanying \cite{Odonoghue2015}.  For instrumentation, we added code
to the official source code in order to record the time taken and
acceleration parameter at each iteration.  Furthermore, each algorithm
was run a second time wherein the value of $\mtx X$ at each iteration
was recorded in order to compute the value of the merit function at
each iteration.  All computations were performed using \texttt{MATLAB}
running on an Intel Xeon E5-2630 CPU.

\subsection{Simulated Data}

We simulate a mostly coherent sum of two parallel Gaussian beams with
coplanar waists, as given by the following mutual intensity function:
\begin{equation*}
\begin{aligned}
J\parens{x_1,x_2} &=
G\parens{x_1;x_0}G\parens{x_2;x_0}+G\parens{x_1;-x_0}G\parens{x_2;-x_0}\\
&\quad+\chi\brackets[\big]{G\parens{x_1;x_0}G\parens{x_2;-x_0}+G\parens{x_1;-x_0}G\parens{x_2;x_0}}
\end{aligned}
\end{equation*}
where $G\parens{x;a} =
\exp\brackets*{-\parens{x-a}^2/\parens{2\sigma^2}}$,
$x_0=\SI{64}{\micro\meter}$, $\sigma=\SI{32}{\micro\meter}$ and
$\chi=0.9$.  This partially coherent field, discretized into a
\num{51x51} mutual intensity matrix ($N=51$) using a sampling interval of
$\Delta=\SI{6.4}{\micro\meter}$, is then propagated to \num{201} axial
positions spaced \SI{250}{\micro\meter} apart.  For each axial
position, we consider \num{101} intensity point samples spaced
\SI{3.2}{\micro\meter} apart for the measurements.  This set of true
intensities is shown in \figref{fig:sim:stacks} under the heading
``noiseless''.  The smallest and largest singular values of matrix
$\mtx A$ as it is defined in \Cref{subsection:globalconvergence} were
$a_{\text{min}}=3.094$ and $a_{\text{max}}=7.925$, respectively.

\begin{figure}[htp]
\centering
\includegraphics[width=0.9\linewidth]{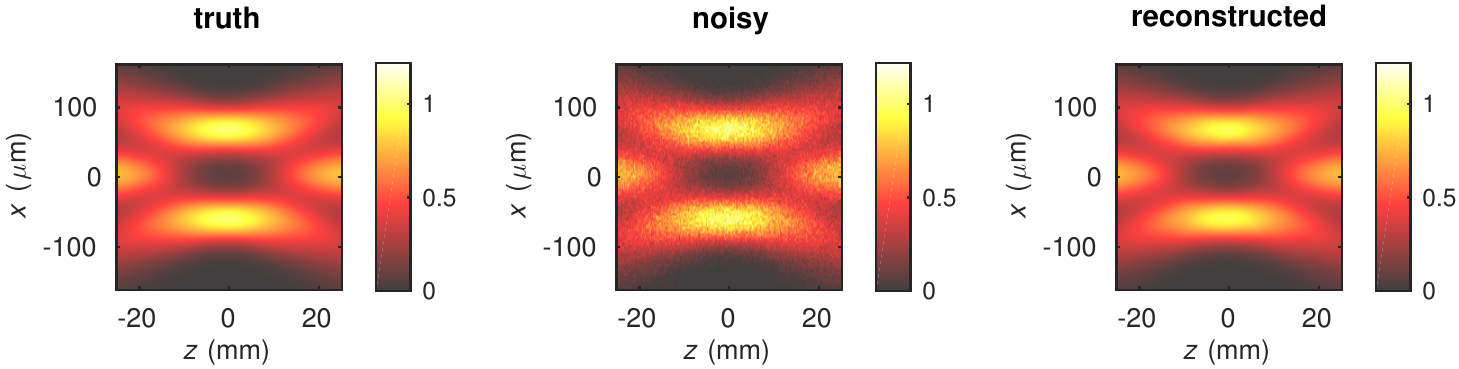}
\caption{Noisy measurements of the intensity are shown on left, with
  the true intensity shown on right.  Propagated intensity of a
  regularized reconstructed result is shown in center.  In all images,
  light propagates from left to right, and the intensity is in
  arbitrary units.\label{fig:sim:stacks}}
\end{figure}

To emulate real world conditions where the $\sigma_m$s are unknown, we
simulate collecting \num{16} measurements for each intensity point
sample, with their mean treated as the $y_m$s in our model and the
standard deviation across these \num{16} samples used as an
approximation to $\sigma_m\sqrt{16}$.  The noisy data for each
measurement is generated by first drawing from a Poisson distribution
whose rate parameter is proportional to the true intensity at that
point, with the sum of all the rate parameters made to equal
\SI{1.02e5} photons.  We then add Gaussian noise with standard
deviation equal to \num{0.01} times the maximum of all the rate
parameters; this is to simulate readout and quantization noise that
would be typically expected of an 8-bit sensor.  This set of simulated
measurements is shown in \figref{fig:sim:stacks} under the heading
``noisy''.  We then run our algorithm with the following inputs:
\begin{itemize}
\item \texttt{noiseless}:
The $y_m$s are set to the ideal noiseless intensity, with the $\sigma_m$s set to
  \num{1} and $\mu=0$ (\emph{i.e.,} no regularization).
\item \texttt{unregularized}: The $y_m$s and $\sigma_m$s are set to
  our simulated noisy data, with $\mu=0$ (\emph{i.e.,} no
  regularization).
\item \texttt{nuclear}: This uses the simulated noisy data and $\mtx
  R=\mtx I$ (\emph{i.e.,} nuclear norm regularization).  $\mu$ is set
  such that $\tfrac12\norm{\mathscr{A}\parens{\mtx X}-\vec
    b}_2^2\approx\alpha M/2$ upon convergence.  The $\alpha M/2$
  threshold for $\tfrac12\norm{\mathscr{A}\parens{\mtx X}-\vec b}_2^2$
  arises from the fact that the norm-squared expression follows a
  chi-squared distribution with $M$ degrees of freedom assuming that
  the $\sigma_m$s are correct values for the standard deviations of
  the Gaussian noise.  $M/2$ would be the mean for such a
  distribution, and $\alpha$ is used to adjust the threshold to take
  into account how well estimated the $\sigma_m$s are and how much of
  the distribution we want to include.  For this particular set of
  data, we are fairly confident of our estimates of $\sigma_m$ and
  have thus set $\alpha=1.5$.
\item \texttt{gradient}: This uses the simulated noisy data and $\mtx
  R=\mtx D$, a tridiagonal matrix with all the elements in the
  diagonal equal to $1$ and all the off-diagonal elements equal to
  $-\tfrac12$. $\mu$ is set such that
  $\tfrac12\norm{\mathscr{A}\parens{\mtx X}-\vec b}_2^2\approx\alpha
  M/2$ upon convergence.  The penalty term $\trace\parens{\mtxh D\mtx
    X}$ is thus equivalent to applying a Haar wavelet to both the rows
  and columns of $\mtx X$, keeping only the high frequency components
  and then taking the trace.  The application of $\mtx D$ is a simple
  approximation for a derivative operator on $x_1$ and $x_2$ in the
  continuous domain, and it physically corresponds to a desire to
  reduce the energy present in the first order spatial derivative of
  the optical wave function.  Mathematically, it penalizes nonsmooth
  solutions to our problem, and it is motivated by our \emph{a priori}
  knowledge that the true solution is smooth.
\item \texttt{early stop}: This is the same as the unregularized case,
  except we terminate our algorithm when
  $\tfrac12\norm{\mathscr{A}\parens{\mtx X}-\vec b}_2^2$ drops below
  $\alpha M/2$ where $\alpha=1.5$.  This result is used as a point of
  reference for comparison with the regularized results, since these
  results all have approximately the same level of measurement
  mismatch, and therefore the differences are due to the presence of
  and choice of regularizer.
\end{itemize}

\begin{figure}[htp]
\centering
\includegraphics[width=0.9\linewidth]{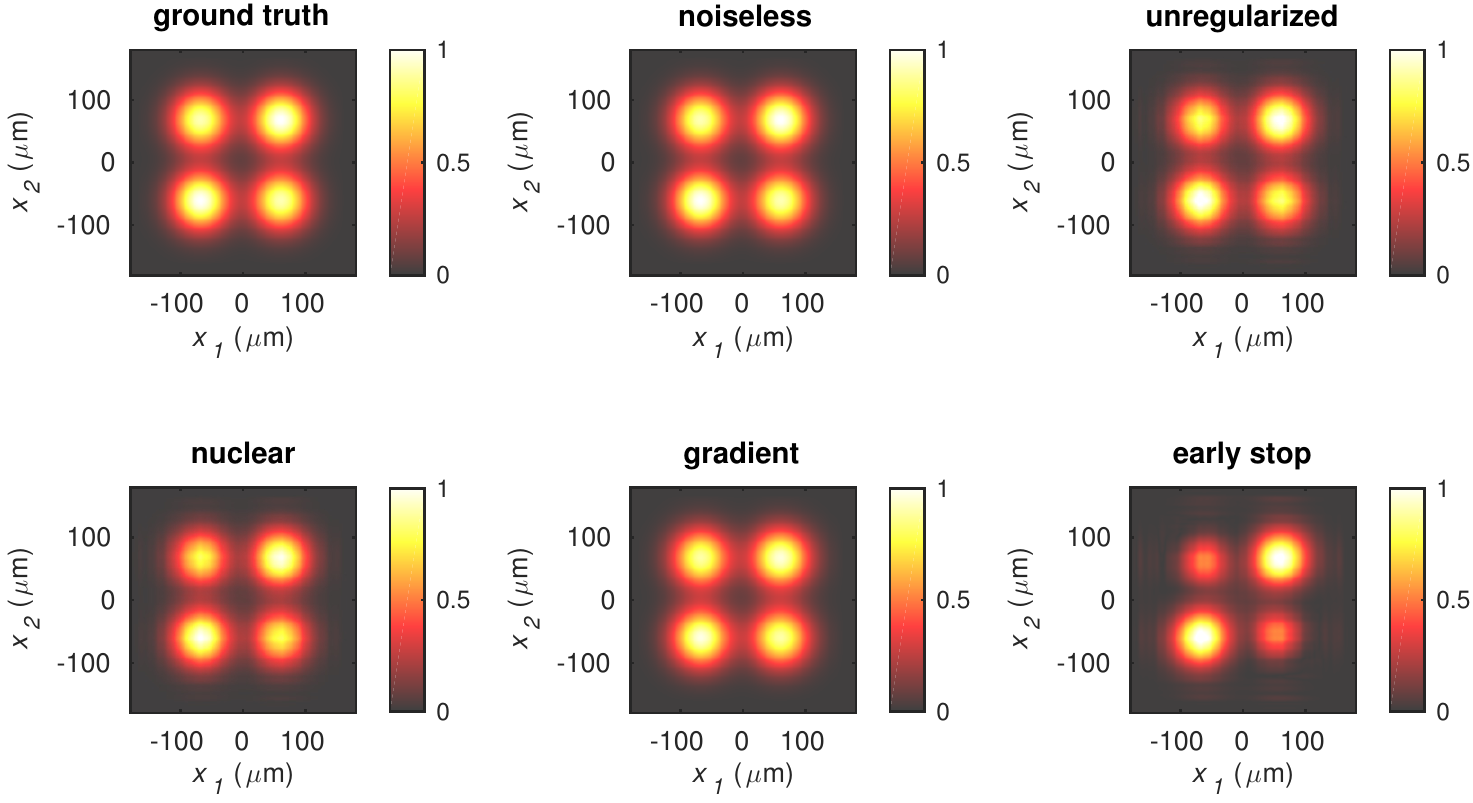}
\caption{Magnitude of the mutual intensity functions for the simulated
  data.  They are all drawn using the same scale, normalized to the
  maximum magnitude of the ground truth.\label{fig:sim:image}}
\end{figure}
\begin{figure}[htp]
\centering
\includegraphics[width=0.85\linewidth]{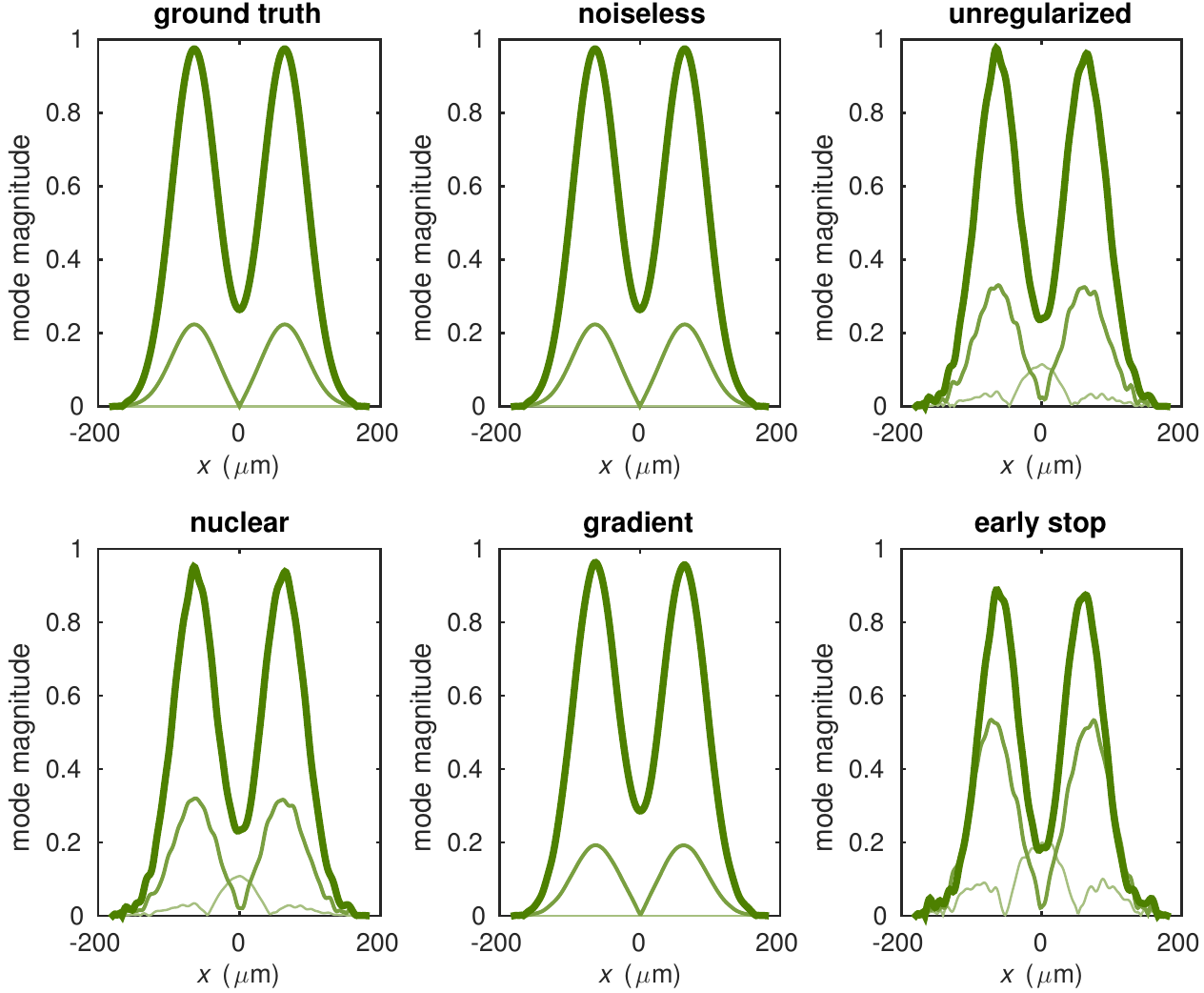}
\caption{Amplitude plots of the \num{3} highest energy
  coherent modes for each mutual intensity
  function.\label{fig:sim:modes}}
\end{figure}

\renewcommand{\arraystretch}{1.1}

\begin{table}[htp]
\centering
\footnotesize
\begin{tabular}{rccccc}
&
\texttt{noiseless}&
\texttt{unregularized}&
\texttt{nuclear}&
\texttt{gradient}&
\texttt{early stop}\\

\hline \\[-1.1em]

normalized error&
\num{7.524e-5}&
\num{0.1147}&
\num{0.1438}&
\num{0.03979}&
\num{0.3977}\\

trace distance&
\num{6.103e-5}&
\num{0.07908}&
\num{0.07658}&
\num{0.02534}&
\num{0.2887}\\

\hline \\[-1.1em]

\end{tabular}

\vspace{0.5\baselineskip}

\caption{Quantitative measurements of error for the different
  reconstructed mutual intensity matrices.  The normalized error is
  defined as $\norm{\mtx X-\mtx X_{\text{true}}}_\frobenius/\norm{\mtx
    X_{\text{true}}}_\frobenius$.  The trace distance, a quantity used
  to describe the difference between two quantum states, is equal to
  half the sum of the singular values of $\rho-\rho_{\text{true}}$,
  where $\rho$ and $\rho_{\text{true}}$ are equal to $\mtx
  X/\trace\parens{\mtx X}$ and $\mtx
  X_{\text{true}}/\trace\parens{\mtx X_{\text{true}}}$ respectively.\label{table:sim:error}}
\end{table}

\begin{figure}[htp]
\centering
\includegraphics[width=0.9\linewidth]{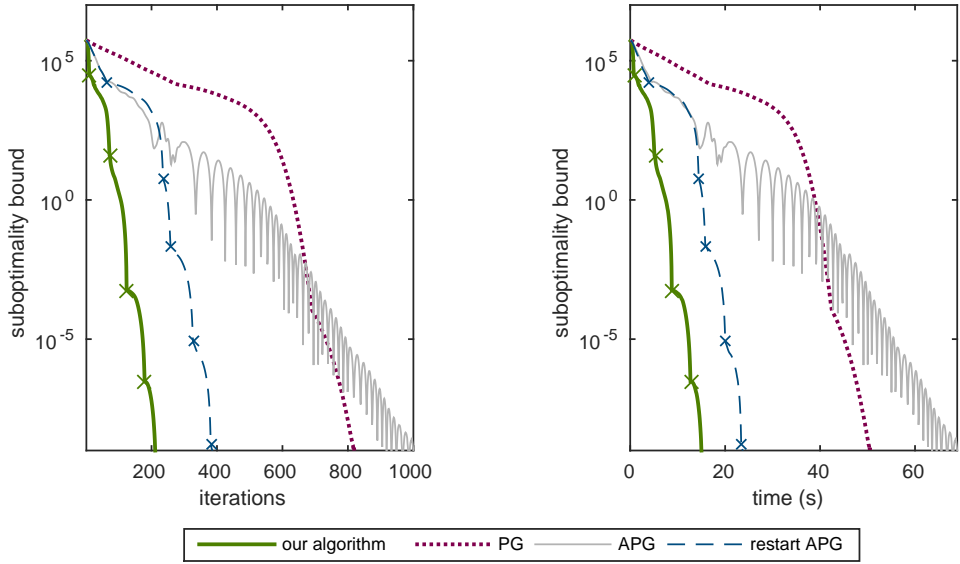}
\caption{Comparison of convergence across different algorithms for the
  simulated data.  The vertical axis is the difference between the
  value of $f\parens{\mtx X}$ and the lowest attained value of
  $f\parens{\mtx X}$ across all algorithms, which in turn gives an
  upper bound on the suboptimality.  On left, the horizontal axis is
  the number of iterations.  On right, the horizontal axis is time
  taken.  The $\times$s mark where restarts
  occurred.\label{fig:sim:algorithm}}
\end{figure}

The magnitude of the reconstructed mutual intensity functions are
shown in \figref{fig:sim:image}, and the magnitude of their coherent
modes~\cite{Wolf1982} are given in
\figref{fig:sim:modes}.\footnote{The $i$th coherent mode is
  $\sqrt{\lambda_i}\vect v_i\vec\xi\parens{x}$ where $\mtx X=\sum_i
  \lambda_i\vec v_i\vech v_i$ is $\mtx X$'s eigenvalue decomposition.}
The propagated intensity using the \texttt{gradient} result is shown
in \figref{fig:sim:stacks}.  Since the ground truth is known, we give
a summary of the reconstruction error using various metrics in
\tabref{table:sim:error}.  A convergence comparison between algorithms
for the \texttt{gradient} input is shown in
\figref{fig:sim:algorithm}.

As is evident from the results, especially \figref{fig:sim:modes},
noise introduces additional energy (as seen in the increased amplitude
for the second mode) as well as an additional mode, resulting in a
reconstructed mutual intensity that appears \emph{less} coherent than
the ground truth.  Furthermore, noise also induces additional high
frequency content in the reconstructed result.  For the specific value
of $\alpha$ used, regularization using the nuclear norm only yields
marginal improvements over the unregularized reconstruction, whereas
regularization using the smoothness-inducing regularizer both smooths
the result and reduces the number of significant modes back down to
the correct number of modes.  Of course, we can increase the $\mu$
parameter for the nuclear norm case to reduce the number of modes, but
it does not fully smooth the result and results in a mutual intensity
that does not match the measurements as well as the smooth-prior
regularized result.  We note that the \texttt{early stop} result has
the same amount of measurement mismatch as the two regularized
solutions, showing how ill-posed the problem is and the necessity for
regularization.

Our algorithm also converges faster than the three other methods given
in \figref{fig:sim:algorithm}, albeit restart APG converges
asymptotically as fast.  Non-restarting APG oscillates due to excess
critical momentum, as described in \cite{Odonoghue2015}.  The
advantages of our algorithm are that: (1) it is provably convergent,
whereas to the best of our knowledge no proof of convergence has been
given for the algorithm given in \cite{Odonoghue2015}, and (2) it
converges much faster than the provably convergent non-restarting APG
(\emph{i.e.,} FISTA~\cite{beck2009fast}).

\subsection{Experimental Data}

We use the experimental data from \cite{Zhang2013}, wherein the
intensity profile of a partially coherent beam is imaged at \num{201}
positions along the optical axis.  The partially coherent beam was
generated by focusing an LED light source through a
\SI{532}{\nano\meter} bandpass filter onto a \SI{100}{\micro\meter}
slit located at the front focal plane of a \SI{100}{\milli\meter}
focal length cylindrical lens.  A \SI{500}{\micro\meter} slit placed
at the back focal plane is illuminated by light passing through the
cylindrical lens, and this slit is imaged using a
\SI{50}{\milli\meter} cylindrical lens placed \SI{150}{\milli\meter}
after the slit.  Based on visual inspection, the axial positions
captured were located between $z=\SI{-30.25}{\milli\meter}$ and
$z=\SI{19.75}{\milli\meter}$ relative to the image of the slit.  

The experimental data $y_m$s are taken from a single row on a camera
with \SI{3.2}{\micro\meter} pitch, and the standard deviation
$\sigma_m$s were estimated from the neighboring \num{16} rows.  A
visualization of the measured and theoretical intensities are shown in
\figref{fig:exp:stacks}.  We use a $\sinc$ basis with sampling
interval $\Delta=\SI{6.4}{\micro\meter}$ to discretize the mutual
intensity into a \num{101x101} matrix ($N=101$).  We note that the
field does not necessarily have a spatial band-limit compatible with
the sampling interval, but since we only measure the intensity at
intervals of \SI{3.2}{\micro\meter}, it would be very difficult to
recover any information at a higher sampling rate and hence we ignore
the higher frequency components.  The smallest and largest singular
values of the matrix $\mtx A$ as it is defined in
\Cref{subsection:globalconvergence} were $a_{\text{min}}=7.625$ and
$a_{\text{max}}=14.18$.

An aberration-free theoretical estimate of the mutual intensity is:
\begin{equation*}
\begin{aligned}
J\parens{x_1,x_2} &\propto
\rect\parens{\beta_1x_1}\rect\parens{\beta_1x_2}
\sinc\brackets[\big]{\beta_2\parens{x_1-x_2}}
\exp\brackets[\big]{\ii\beta_3\parens{x_1^2-x_2^2}}\\
&\quad\otimes\sinc\parens{x_1/\Delta}\sinc\parens{x_2/\Delta}
\end{aligned}
\end{equation*}
where $\beta_1=\SI{4}{\per\milli\meter}$,
$\beta_2^{-1}=\SI{532}{\micro\meter}$,
$\beta_3=\SI{1.9684e-4}{\per\micro\meter\squared}$, and $\otimes$
denotes convolution.  Since ground truth is not available, we use this
aberration-free estimate as a rough point of reference; it is not
intended to be interpreted as the ground truth, which may be slightly
blurred or distorted by aberrations.

\begin{figure}[htp]
\centering
\includegraphics[width=0.9\linewidth]{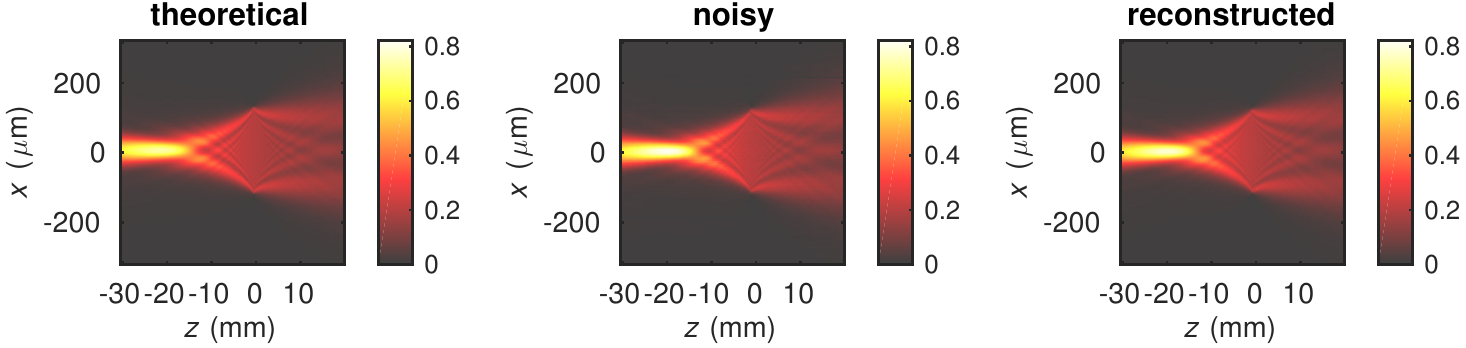}
\caption{Noisy measurements of the intensity are shown on left, with
  the theoretical aberration-free intensity profile shown on the
  right.  Propagated intensity of a regularized reconstructed result
  is shown in center.  In all images, light propagates from left to
  right, and the intensity is in arbitrary units.  The theoretical
  intensity is provided as a point of reference and is not necessarily
  the ground truth.\label{fig:exp:stacks}}
\end{figure}

We again use several different sets of parameters for our algorithm,
although we replaced two of the input sets and used a different value
of $\alpha$ for the target value of
$\tfrac12\norm{\mathscr{A}\parens{\mtx X}-\vec b}_2^2$ in the
regularized reconstructions.  Instead of the \texttt{early stop} and
\texttt{noiseless} input parameter sets, we added the
\texttt{nuclear+support} and \texttt{window} input parameter sets.
The former uses the same regularizer as the \texttt{nuclear} data set,
but additionally forces coefficients of basis functions whose centers
lie outside the center \SI{250}{\micro\meter} region to be zero, as a
way to demonstrate the state of the art nuclear norm regularizer
combined with a \emph{hard} support constraint.  The \texttt{window}
input parameter set uses $\mtx R=\mtx W$, a diagonal matrix whose
entries are unity in the center \SI{250}{\micro\meter} region and
increase linearly away from this region, up to a maximum of \num{391}
at the ends.  The idea with these additional input sets is to
demonstrate how one can incorporate \emph{a priori} information about
the support of the solution---since we know our slit should be imaged
to a region that wide, we would like to penalize any contributions
outside of this region.  While \texttt{nuclear+support} uses a hard
constraint, it is not necessarily appropriate because the field may
not actually be exactly zero outside the region, due to the presence
of possible aberrations in the system.  The \texttt{window} approach
is a gentler way of finding a less energetic solution while at the
same time preferring to remove energy from areas where we do not
expect much energy to be present, \emph{i.e.,} it is a \emph{soft}
support constraint.  Coincidentally, this regularizer is also
equivalent to imposing a smoothness constraint on the intensity in the
far field of the partially coherent beam.  We also chose a value of
$\alpha=5$ to account for additional possible errors in $\mathscr{A}$
(due to imperfect equipment and calibration) and standard deviation
estimation.

\begin{figure}[htp]
\centering
\includegraphics[width=0.9\linewidth]{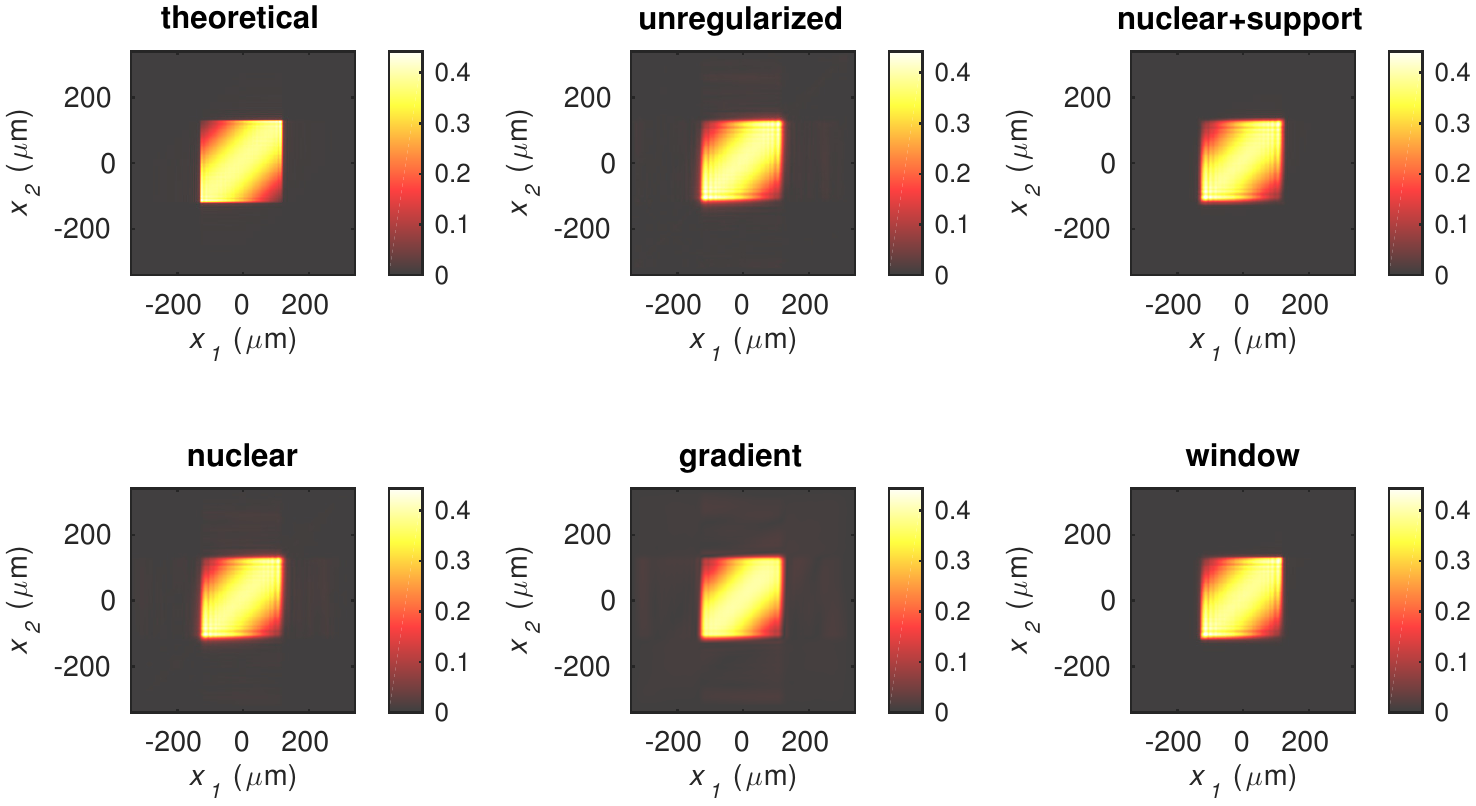}
\caption{Magnitude of the mutual intensity functions for the
  experimental data.  They are all drawn using the same scale, with
  values in arbitrary units.\label{fig:exp:image}}
\end{figure}
\begin{figure}[htp]
\centering
\includegraphics[width=0.85\linewidth]{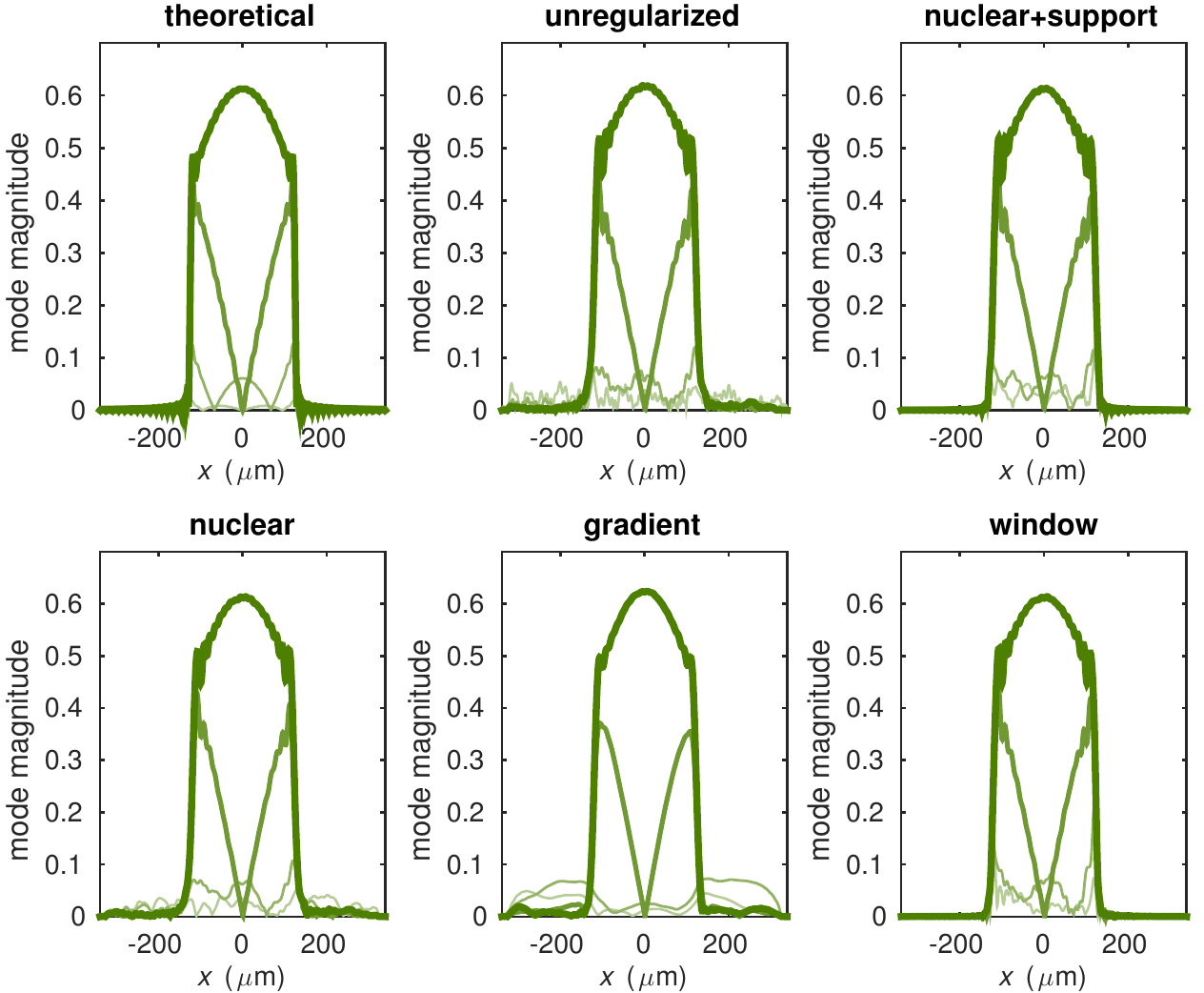}
\caption{Magnitude plots of the \num{4} highest energy
  coherent modes for each mutual intensity
  function.\label{fig:exp:modes}}
\end{figure}

\begin{table}[htp]
\footnotesize
\centering
\begin{tabular}{rccccc}

&
\texttt{unregularized}&
\texttt{nuclear+support}&
\texttt{nuclear}&
\texttt{gradient}&
\texttt{window}\\

\hline\\[-1.1em]

normalized RMSE&
\num{0.2363}&
\num{0.2336}&
\num{0.2317}&
\num{0.2189}&
\num{0.2176}\\

trace distance&
\num{0.1795}&
\num{0.1717}&
\num{0.1666}&
\num{0.1743}&
\num{0.1486}\\

\hline\\[-1.1em]

\end{tabular}
\vspace{0.5\baselineskip}
\caption{Quantitative measurements of the difference between the
  mutual intensity matrices reconstructed from experimental data and
  the aberration-free mutual intensity estimate.  The quantities are
  defined in the same way as \cref{table:sim:error}, with the
  aberration-free mutual intensity estimate taken as the
  ``truth''.\label{table:exp:error}}
\end{table}

\begin{figure}[htp]
\centering
\includegraphics[width=0.9\linewidth]{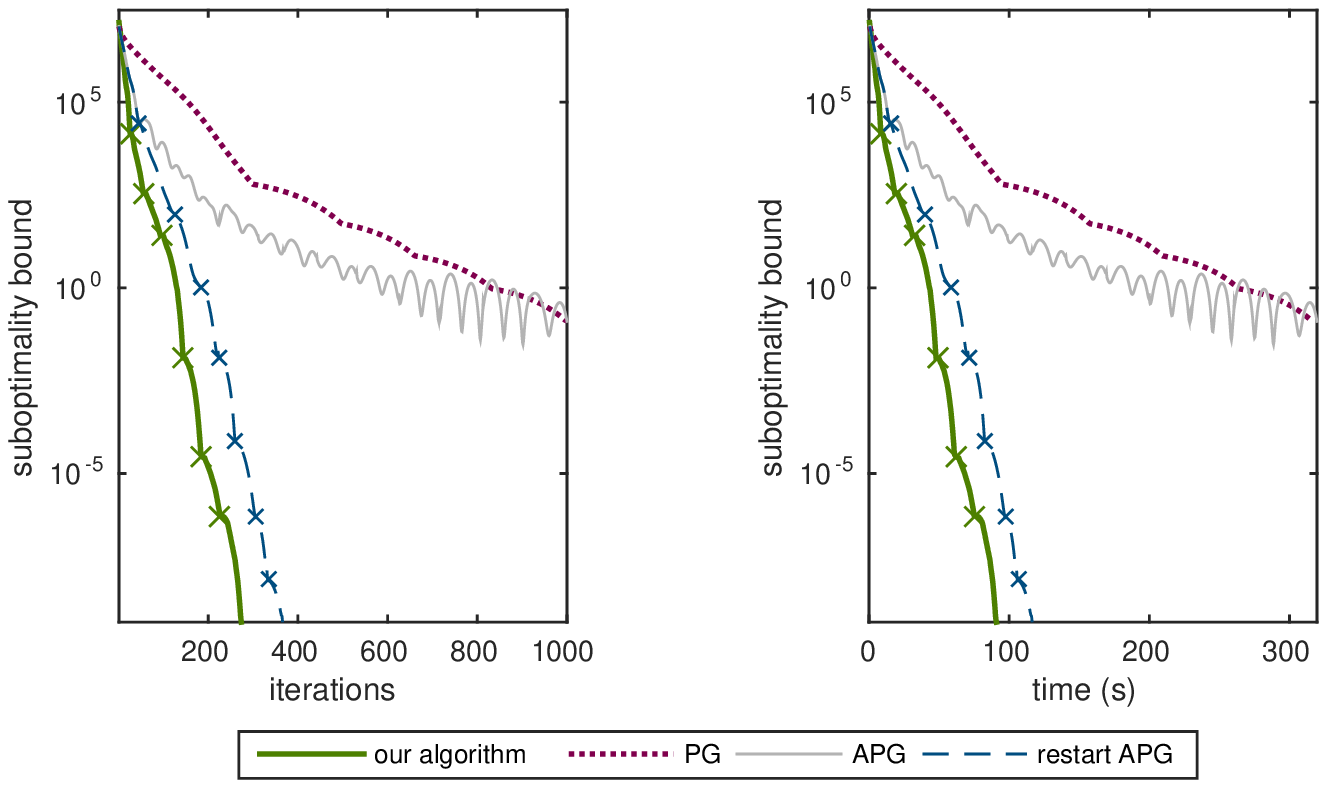}
\caption{Comparison of convergence across different algorithms for the
  experimental data.  The vertical axis is the difference between the
  value of $f\parens{\mtx X}$ and the lowest attained value of
  $f\parens{\mtx X}$ across all algorithms, which in turn gives an
  upper bound on the suboptimality.  On left, the horizontal
  axis is the number of iterations.  On right, the horizontal axis
  is time taken.  The $\times$s mark where restarts
  occurred.\label{fig:exp:algorithm}}
\end{figure}

The reconstructed mutual intensity functions are shown in
\figref{fig:exp:image} with modes shown in \figref{fig:exp:modes}.  We
give quantitative comparisons between the theoretical mutual intensity
and the reconstructed mutual intensity in \tabref{table:exp:error},
and a comparison of algorithm convergence for the \texttt{window}
input set is given in \figref{fig:exp:algorithm}.  All regularized
solutions remove some amount of noise, as evident in the increased
smoothness and reduction of high order mode energy in the coherent
modes visualization.  The \texttt{nuclear} result cleans up the
reconstruction compared to the noisy reconstruction, but it still
leaves a lot of excess energy in the higher order modes, especially
energy outside the region occupied by the slit.  The \texttt{gradient}
regularizer is good at reducing the number of modes, but it
oversmoothes the result---the third mode spills outside of the region
occupied by the slit and resembles neither the third mode from the
theoretical results nor that of the noisy results, and the sharp edges
of the second mode are gone.  The image of the magnitude
of the mutual intensity in \figref{fig:exp:image} is also quite
blurry.  While the \texttt{nuclear} result is an improvement over the
unregularized result, it is obvious that applying additional prior
information about the support yields a much better reconstruction, as
can be seen in the \texttt{nuclear+support} and \texttt{window}
cases.  However, the application of a hard support constraint might
not be suitable in this particular situation, as we do not know about
the extent of aberrations in the imaging system.  Furthermore,
\texttt{window} does manage to reconstruct the third mode better than
all of the other methods; \texttt{nuclear+support} still does not
perform as well, and leaves excess energy in the fourth mode.

The experimental data convergence results in
\figref{fig:exp:algorithm} are quite similar to the one for the
simulated data---our algorithm has an asymptotic convergence rate
comparable to gradient restart APG, albeit we again reach the fast
convergence regime faster.

\section{Conclusion}

\label{section:discussion}

We have demonstrated that trace-regularized coherence retrieval can be
a powerful tool in recovering the mutual intensity when the inverse
problem is ill-conditioned.  The generalization of the nuclear-norm
enables flexibility in applying \emph{a priori} information, leading
to higher quality reconstructions.  Furthermore, we have demonstrated
an efficient numerical scheme for our coherence retrieval model, with
performance at worst comparable to the state-of-the-art adaptive
restart APG scheme while simultaneously being provably globally
convergent, with mild conditions required for linear convergence.

This work uses very simple $\mtx R$ matrices for regularization, with
good results, but more flexibility and power can be attained by
leveraging the framework of tight frames through further study.
Furthermore, a method for exploiting redundant information in the
measurements to calibrate real-world $\mtx K_m$s as well as methods to
reduce the memory and computational footprint for high dimensional
structured mutual intensity matrices are all potential avenues for
future exploration.

\bibliographystyle{siamplain}
\bibliography{regcohret}

\appendix

\section{Proof of \cref{thm:convergence}}
\label{proof1}
Given
$\eta\in(0,+\infty]$, define $\Phi_\eta$ be the class of all concave
and continuous functions $\phi:[0,\eta)\rightarrow\mathbb{R}_+$ that
satisfy $\phi(0)=0$, $\phi$ is $C^1$ on $(0,\eta)$ and continuous at
$0$, and $\phi^{'}(s)>0,\ \forall s\in(0,\eta)$.  
\begin{definition}\label{def:KL}
	Let $f:\mathbb{R}^n\rightarrow(-\infty,+\infty]$ be proper and
	lower semicontinuous. The function $f$ is said to satisfy the
	{\emph{Kurdyka-\L{}ojasiewicz}} (KL) inequality at $\vec{\bar
		x}\in\dom\parens{\partial f}$ if there exist $\eta\in(0,+\infty]$, a
	neighborhood $\mathcal U$ of $\vec{\bar x}$ and a function
	$\phi\in\Phi_\eta$, such that for all $\vec x\in\mathcal
	U\cap\brackets*{f(\vec{\bar x})<f(\vec x)<f(\vec{\bar x})+\eta}$, the following inequality holds:
	\begin{equation}\label{eq:KL}
	\phi^{'}(f(\vec x)-f(\vec{\bar x}))\dist(\vec 0,\partial
	f(\vec x))\geq 1.
	\end{equation}
	A function $f(\vec x)$ is called a KL function if $f$
	satisfies the KL property at every $\vec x\in\dom\parens{\partial f}$.
\end{definition}
We will now show some basic properties
of \cref{alg:adaAPG} in the following lemmas and use them to prove the
global convergence of \cref{alg:adaAPG}.
\begin{lemma}\label{lemma:1}
	Let $\{\mtx X_k\}$ be the sequence generated by
	\cref{alg:adaAPG}. Then, there exists $a,b>0$ and $\bar w\in[0,1)$ such that for all $k\geq 0$, we have
	\begin{align}
		h(\mtx X_k)-h(\mtx X_{k+1}) &\geq a\norm{\mtx X_k-\mtx X_{k+1}}_\frobenius^2, \label{le:suff} \\
		{\dist}(0,\partial h(\mtx X_{k+1})) &\leq b\parens*{\norm{\mtx X_{k+1}-\mtx X_k}_\frobenius+\bar w\norm{\mtx X_k-\mtx X_{k-1}}_\frobenius}.\label{le:diff}
	\end{align}
\end{lemma}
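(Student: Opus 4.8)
The plan is to establish the two inequalities separately by carefully tracking what \cref{alg:adaAPG} does in each of its two branches (the ``accept'' branch, where $\mtx X_{k+1}=\mtx Z_{k+1}$ and momentum is applied, and the ``restart'' branch, where $t_k$ is reset and $\mtx Y_k$ is replaced by $\mtx X_k$). I will treat the algorithm's iterates $\{\mtx X_k\}$ as the ``outer'' sequence; every time we take a step, we move from $\mtx X_k$ to $\mtx X_{k+1}=\mtx Z_{k+1}=\Prox_g^{\alpha_k}(\mtx Y_k-\alpha_k\nabla f(\mtx Y_k))$, where $\mtx Y_k$ is either the extrapolated point or equals $\mtx X_k$ after a restart. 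The point is that the restart criterion \eqref{res:grad} is designed precisely so that the sufficient-decrease inequality \eqref{le:suff} holds \emph{in terms of the outer iterates}, not merely in terms of the $\mtx Y_k$'s.

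For \eqref{le:suff}: I would start from the backtracking acceptance condition \eqref{est:1}, which gives $h(\mtx Y_k)-h(\mtx Z_{k+1})\geq\delta\norm{\mtx Y_k-\mtx Z_{k+1}}_\frobenius^2=\delta\norm{\mtx U_k}_\frobenius^2$. This controls the decrease from $\mtx Y_k$, but I need the decrease from $\mtx X_k$. Writing $h(\mtx X_k)-h(\mtx X_{k+1})=\bigl[h(\mtx X_k)-h(\mtx Y_k)\bigr]+\bigl[h(\mtx Y_k)-h(\mtx Z_{k+1})\bigr]$, the second bracket is $\geq\delta\norm{\mtx U_k}_\frobenius^2$, and the first bracket I would bound below using the convexity of $f$ and the fact that $g$ is the indicator of $\PSD$ (so $g(\mtx X_k)=g(\mtx Y_k)=0$ once iterates are feasible, with the $k=1$/post-restart case $\mtx Y_k=\mtx X_k$ being trivial): the gap $h(\mtx X_k)-h(\mtx Y_k)=f(\mtx X_k)-f(\mtx Y_k)$ is a quadratic in the displacement $\mtx X_k-\mtx Y_k$ and can be written using $\nabla f$ and the Hessian (the operator $\A{\cdot}^\HT\A{\cdot}$). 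After the algebra, the crucial term is the one controlled by \eqref{res:grad}: the restart condition guarantees $\innerproduct{\mtx U_k}{\mtx V_k}-\alpha_k\innerproduct{\A{\mtx U_k}}{\A{\mtx V_k}}\geq\gamma\norm{\mtx V_k}_\frobenius^2$, with $\mtx V_k=\mtx X_k-\mtx Z_{k+1}=\mtx X_k-\mtx X_{k+1}$. Combining the quadratic expansion, the three-point identity relating $\mtx U_k$, $\mtx V_k$, and $\mtx X_k-\mtx Y_k$, and this inequality, I expect to land on $h(\mtx X_k)-h(\mtx X_{k+1})\geq a\norm{\mtx V_k}_\frobenius^2=a\norm{\mtx X_k-\mtx X_{k+1}}_\frobenius^2$ for a suitable $a>0$ depending on $\gamma,\delta,\alpha_{\min},\alpha_{\max}$. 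I would also check the restart branch itself: there no outer step is taken, so \eqref{le:suff} holds vacuously (with $\mtx X_{k+1}=\mtx X_k$ on the next real index, or by re-indexing so the restart does not produce an outer iterate).

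For \eqref{le:diff}: I would use the optimality condition of the prox step. Since $\mtx X_{k+1}=\Prox_g^{\alpha_k}(\mtx Y_k-\alpha_k\nabla f(\mtx Y_k))$, we have $\tfrac1{\alpha_k}(\mtx Y_k-\alpha_k\nabla f(\mtx Y_k)-\mtx X_{k+1})\in\partial g(\mtx X_{k+1})$, hence $\tfrac1{\alpha_k}(\mtx Y_k-\mtx X_{k+1})-\nabla f(\mtx Y_k)+\nabla f(\mtx X_{k+1})\in\partial g(\mtx X_{k+1})+\nabla f(\mtx X_{k+1})=\partial h(\mtx X_{k+1})$. Taking norms and using the Lipschitz continuity of $\nabla f$ (constant $L$) together with $\alpha_k\geq\alpha_{\min}$ gives $\dist(0,\partial h(\mtx X_{k+1}))\leq(\tfrac1{\alpha_{\min}}+L)\norm{\mtx Y_k-\mtx X_{k+1}}_\frobenius$. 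Finally I bound $\norm{\mtx Y_k-\mtx X_{k+1}}_\frobenius\leq\norm{\mtx Y_k-\mtx X_k}_\frobenius+\norm{\mtx X_k-\mtx X_{k+1}}_\frobenius$ and note $\mtx Y_k-\mtx X_k=\tfrac{t_{k-1}-1}{t_k}(\mtx X_k-\mtx X_{k-1})$, with the coefficient lying in $[0,1)$ (this is the standard bound on the FISTA momentum weights, and after any restart the coefficient is $0$); set $\bar w=\sup_k\tfrac{t_{k-1}-1}{t_k}<1$. This yields \eqref{le:diff} with $b=\tfrac1{\alpha_{\min}}+L$.

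The main obstacle I anticipate is the bookkeeping in \eqref{le:suff}: making the decrease inequality hold with respect to $\mtx X_k$ rather than $\mtx Y_k$. One must combine the backtracking estimate \eqref{est:1} (which references $\mtx Y_k$) with the restart test \eqref{res:grad} (which references the relationship between $\mtx X_k$, $\mtx Y_k$, and $\mtx Z_{k+1}$) in just the right way, handling both the case where \eqref{res:grad} holds and a step is taken, and the case where it fails and a restart occurs. Getting an explicit positive constant $a$ out of this, uniformly over $k$, while using only $\alpha_k\in[\alpha_{\min},\alpha_{\max}]$, is the delicate part; everything else is routine convex-analysis manipulation. I would also need to handle the (trivial but necessary) base cases $\mtx Y_0=\mtx Y_1=\mtx X_0=\mtx X_1$ so the inequalities hold for all $k\geq0$ as stated.
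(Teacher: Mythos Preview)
Your argument for \eqref{le:diff} is essentially the paper's: exhibit the subgradient $\nabla f(\mtx X_{k+1})+\tfrac{1}{\alpha_k}(\mtx Y_k-\mtx X_{k+1})-\nabla f(\mtx Y_k)\in\partial h(\mtx X_{k+1})$, take norms, use $L$-Lipschitzness of $\nabla f$, the triangle inequality, and bound the momentum weight (the cap $k_{\text{maxres}}$ is what makes $\bar w<1$). No issue there.

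The gap is in your route to \eqref{le:suff} in the ``accept with momentum'' case. Your decomposition $h(\mtx X_k)-h(\mtx X_{k+1})=\bigl[h(\mtx X_k)-h(\mtx Y_k)\bigr]+\bigl[h(\mtx Y_k)-h(\mtx Z_{k+1})\bigr]$ silently assumes $g(\mtx Y_k)=0$, but the extrapolated point $\mtx Y_k=\mtx X_k+\tfrac{t_{k-1}-1}{t_k}(\mtx X_k-\mtx X_{k-1})$ need not lie in $\PSD$, so $h(\mtx Y_k)$ may be $+\infty$ and the split is meaningless. Even granting feasibility, the backtracking estimate \eqref{est:1} controls $\norm{\mtx U_k}_\frobenius^2$, while you need $\norm{\mtx V_k}_\frobenius^2$; combining the quadratic expansion of $f(\mtx X_k)-f(\mtx Y_k)$ with $\delta\norm{\mtx U_k}_\frobenius^2$ leaves the term $\langle\nabla f(\mtx Y_k),\mtx V_k-\mtx U_k\rangle$ hanging, and the restart quantity $\langle\mtx U_k,\mtx V_k\rangle-\alpha_k\langle\A{\mtx U_k},\A{\mtx V_k}\rangle$ does not fall out of that expression without, in effect, invoking the prox optimality condition---at which point you have reproduced the paper's argument and no longer need \eqref{est:1}.

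The paper bypasses $\mtx Y_k$ altogether in this case. It uses convexity of $h$ with the specific subgradient at $\mtx X_{k+1}$ coming from the prox step:
\[
h(\mtx X_k)-h(\mtx X_{k+1})\ \geq\ \bigl\langle\nabla f(\mtx Z_{k+1})+\tfrac{1}{\alpha_k}(\mtx Y_k-\mtx Z_{k+1})-\nabla f(\mtx Y_k),\ \mtx X_k-\mtx Z_{k+1}\bigr\rangle.
\]
Since $f$ is quadratic, $\nabla f(\mtx Z_{k+1})-\nabla f(\mtx Y_k)=-\mathscr{A}^{\HT}\mathscr{A}(\mtx U_k)$, so the right-hand side is precisely $\tfrac{1}{\alpha_k}\bigl[\langle\mtx U_k,\mtx V_k\rangle-\alpha_k\langle\A{\mtx U_k},\A{\mtx V_k}\rangle\bigr]\geq\tfrac{\gamma}{\alpha_{\max}}\norm{\mtx V_k}_\frobenius^2$ by \eqref{res:grad}. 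The backtracking bound \eqref{est:1} is used only in the case $\mtx Y_k=\mtx X_k$ (after restart), where it gives $h(\mtx X_k)-h(\mtx X_{k+1})\geq\delta\norm{\mtx X_k-\mtx X_{k+1}}_\frobenius^2$ directly. Then $a=\min(\delta,\gamma/\alpha_{\max})$.
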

\begin{proof}
	Define 
	\begin{equation}\label{prox:W}
	\mtx W_{k+1} =\Prox_{g}^{\alpha_k}(\mtx X_k-\alpha_k\Grad f(\mtx X_k)).
	\end{equation}
	Let $\Omega_1 =\set{k\ssuchthat\mtx X_{k+1}=\mtx W_{k+1},
		k>1}$ and $\Omega_2=\set{k\ssuchthat\mtx X_{k+1}=\mtx
		Z_{k+1}}$ where $\mtx Z_{k+1}$ is defined in \eqref{prox:Z}. Then, $\Omega_1\cap\Omega_2=\emptyset$ and
	$\Omega_1\cup\Omega_2=\mathbb{N}$. We consider two cases:
	(I) $k\in\Omega_1$ and (II) $k\in\Omega_2$.
	
	\noindent {\bf Case I.} $k\in\Omega_1$ implies that $\mtx X_k=\mtx
	Y_k$, and hence $h(\mtx X_k)-h(\mtx X_{k+1})\geq \delta\norm{\mtx X_k-\mtx
		X_{k+1}}_\frobenius^2$ due to inequality \eqref{est:1} from the step size
	estimation process. From the optimality condition of \eqref{prox:W},
	we know
	\begin{equation}\label{diff:sub1}
	\tfrac{1}{\alpha_k}(\mtx X_k-\mtx X_{k+1})-\Grad f(\mtx X_k)\in\partial g(\mtx X_{k+1}).
	\end{equation}
	The  inequality \eqref{diff:sub1} implies
	\begin{equation*}
		\begin{aligned}
			\dist(0,\partial h(\mtx X_{k+1}))& \leq \norm*{\Grad f(\mtx
				X_{k+1}) 	+\tfrac{1}{\alpha_k}(\mtx X_k-\mtx
				X_{k+1})-\Grad f(\mtx X_k)}_\frobenius \\
			& \leq (L+1/\alpha_{\text{min}}) \norm{\mtx X_{k+1}-\mtx X_k}_\frobenius.
		\end{aligned}
	\end{equation*}
	since $\alpha_k\geq \alpha_{\text{min}}>0,\forall k$.\\
	{\bf Case II: $k\in\Omega_2$.} From the first order optimality condition of \eqref{prox:Z}, we have
	\begin{equation}\label{diff:sub2}
	\tfrac{1}{\alpha_k}(\mtx Y_k-\mtx Z_{k+1})-\Grad f(\mtx Y_k)\in\partial g(\mtx Z_{k+1}).
	\end{equation}
	Then, together with the convexity of $h$ and the fact that
	$\mtx X_{k+1}=\mtx Z_{k+1}$, we have
	\begin{equation*}
		\begin{aligned}
			h(\mtx X_{k})-h(\mtx X_{k+1})&\geq \innerproduct*{\Grad f(\mtx Z_{k+1}) 	+\tfrac{1}{\alpha_k}(\mtx Y_k-\mtx Z_{k+1})-\Grad f(\mtx Y_k)}{\mtx X_{k}-\mtx Z_{k+1}}\\
			&=
			\innerproduct*{\parens*{\tfrac{1}{\alpha_k}\mathbf{I}-\mathscr{A}^\HT\mathscr{A}}(\mtx
				Y_k - \mtx Z_{k+1})}{\mtx X_k - \mtx Z_{k+1}}\\
			&  \geq \frac{\gamma}{\alpha_{\text{max}}}\norm{\mtx X_k-\mtx X_{k+1}}_\frobenius^2,
		\end{aligned}
	\end{equation*}
	since the inequality \eqref{res:grad} holds and
	$0<\alpha_{\text{min}}\leq\alpha_k\leq\alpha_{\text{max}}$. Moreover, the inequality
	\begin{equation*}
		\begin{split}
			\dist\parens*{0,\partial h(\mtx X_{k+1})}& \leq \norm*{\Grad f(\mtx Z_{k+1}) 	+\tfrac{1}{\alpha_k}(\mtx Y_k-\mtx Z_{k+1})-\Grad f(\mtx Y_k)}_\frobenius \\
			& \leq (L+1/\alpha_{\text{min}}) \parens*{\norm{\mtx X_{k+1}-\mtx X_k}_\frobenius+\bar w\norm{\mtx X_k-\mtx X_{k-1}}_\frobenius}
		\end{split}
	\end{equation*}
	holds where $\bar w\in[0,1)$ since $(t_k-1)/t_{k+1}<1$ for all $k\leq k_{\text{maxres}}$.
	
	Consequently, the two cases yield that the inequality
	\eqref{le:suff} holds with $a=\min(\delta,\gamma/\alpha_{\text{max}})>0$ and the
	inequality \eqref{le:diff} holds with $b=L+1/\alpha_{\text{min}}>0$.
\end{proof}
Denote $w(\mtx X_0)$ to be the limiting points of $\{\mtx X_k\}$. The
next lemma shows that \cref{alg:adaAPG} is subsequence convergent,
\emph{i.e.,} all the convergent subsequences converge to a minimal
point when the generated sequence is bounded.
\begin{lemma}\label{lemma:2}
	Let $\{\mtx X_k\}$ be the sequence generated by
	\cref{alg:adaAPG} starting from $\mtx X_0$. If $\{\mtx X_k\}$ is bounded, then $w(\mtx X_0)$ is a non-empty compact set, and $w(\mtx X_0)\subseteq\mathcal{X}_\star\neq\emptyset$.
\end{lemma}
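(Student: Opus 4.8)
The plan is to combine the two estimates from \cref{lemma:1} with the boundedness hypothesis along the standard lines for descent-type methods, in three stages.

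First I would settle the topological assertions, which require essentially no work: boundedness of $\set{\mtx X_k}$ together with Bolzano--Weierstrass gives $w(\mtx X_0)\neq\emptyset$, and writing $w(\mtx X_0)=\bigcap_{q\geq1}\overline{\set{\mtx X_k\ssuchthat k\geq q}}$ displays it as a nested intersection of nonempty compact sets (each closed and contained in the bounded set $\overline{\set{\mtx X_k\ssuchthat k\geq0}}$), hence nonempty and compact.

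Second I would show $\norm{\mtx X_k-\mtx X_{k+1}}_\frobenius\to0$. Inequality \eqref{le:suff} makes $\set{h(\mtx X_k)}$ non-increasing; once the iterates have passed through a proximal step they lie in $\PSD$ (the proximal operator is a projection onto $\PSD$), and there $h(\mtx X_k)=f(\mtx X_k)\geq0$ because $\mtx R\in\PSD$, so $\set{h(\mtx X_k)}$ is bounded below and converges. Telescoping \eqref{le:suff} then yields $\sum_{k}\norm{\mtx X_k-\mtx X_{k+1}}_\frobenius^2<\infty$, and in particular $\norm{\mtx X_k-\mtx X_{k+1}}_\frobenius\to0$.

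Third I would identify each $\mtx X^\ast\in w(\mtx X_0)$ as a global minimizer. Pick a subsequence $\mtx X_{k_j}\to\mtx X^\ast$; by the second stage $\mtx X_{k_j+1}\to\mtx X^\ast$ as well. By \eqref{le:diff} we may select $\mtx G_{k+1}\in\partial h(\mtx X_{k+1})$ with $\norm{\mtx G_{k+1}}_\frobenius\leq b\parens{\norm{\mtx X_{k+1}-\mtx X_k}_\frobenius+\bar w\norm{\mtx X_k-\mtx X_{k-1}}_\frobenius}\to0$, so $\mtx G_{k_j+1}\to\mtx 0$. Since the iterates lie in the closed set $\PSD$, so does $\mtx X^\ast$, hence $g(\mtx X_{k_j+1})=g(\mtx X^\ast)=0$, and by continuity of the quadratic $f$ we get $h(\mtx X_{k_j+1})\to h(\mtx X^\ast)$. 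Letting $j\to\infty$ in the convexity inequality $h(\mtx Y)\geq h(\mtx X_{k_j+1})+\innerproduct{\mtx G_{k_j+1}}{\mtx Y-\mtx X_{k_j+1}}$, valid for all $\mtx Y\in\SN$, gives $h(\mtx Y)\geq h(\mtx X^\ast)$ for every $\mtx Y$, i.e.\ $\mtx 0\in\partial h(\mtx X^\ast)$ and $\mtx X^\ast\in\mathcal{X}_\star$. Thus $w(\mtx X_0)\subseteq\mathcal{X}_\star$, and since $w(\mtx X_0)$ is nonempty, $\mathcal{X}_\star\neq\emptyset$.

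The only step needing care is this last limit passage: it amounts to the outer semicontinuity of $\partial h$, which for nonsmooth $h$ cannot be invoked without the companion convergence $h(\mtx X_{k_j+1})\to h(\mtx X^\ast)$. I would obtain this by splitting $h=f+g$ — continuity of $f$ handles the smooth term and closedness of $\PSD$ makes the indicator $g$ vanish both along the subsequence and at $\mtx X^\ast$ — so that no lower-semicontinuity gap can open up.
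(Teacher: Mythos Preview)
Your proposal is correct and follows essentially the same route as the paper: compactness of $w(\mtx X_0)$ via nested intersections, square-summability of increments from \eqref{le:suff} and the lower bound $h\geq0$, then passing to the limit in the convex subgradient inequality using \eqref{le:diff}. The only cosmetic difference is that the paper first restricts the subgradient inequality to a bounded neighborhood of $\bar{\mtx X}$ (so that $\dist(\mtx 0,\partial h(\mtx X_{k_j+1}))\norm{\mtx X-\mtx X_{k_j+1}}_\frobenius\to0$ uniformly there) and then invokes convexity to go global, whereas you apply the inequality globally for each fixed $\mtx Y$; both are fine.
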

\begin{proof}
	Since the sequence $\{\mtx X_k\}$ is bounded,
	$w(\mtx X_0)$ is
	non-empty. Meanwhile, $w(\mtx X_0)$ is compact since it is the
	intersection of compact sets, \emph{i.e.,} $w(\mtx X_0)
	=\cap_{q\in\mathbb{N}}\overline{\cup_{k\geq q}\{\mtx X_k\}}$,
	where $\overline{\mathcal A}$ denotes the closure of set
	$\mathcal A$. Let $\bar{\mtx X}$ be any point in $w(\mtx X_0)$
	and $\{\mtx X_{k_j}\}$ be a convergent subsequence of $\{\mtx
	X_k\}$ such that $\mtx X_{k_j}\rightarrow\bar{\mtx X}$ as
	$j\rightarrow+\infty$.  Since $\mtx R\in\PSD$, $h(\mtx X)\geq
	0,\forall \mtx X$. Together with the inequality
	\eqref{le:suff}, we know there exists some $\bar h$ such that
	$h(\mtx X_k)\rightarrow\bar h$ as
	$k\rightarrow+\infty$. Moreover, \eqref{le:suff} implies
	\begin{equation*}
		a\sum_{k=0}^{\infty}\norm{\mtx X_k-\mtx X_{k+1}}_\frobenius^2\leq \sum_{k=0}^{\infty} (h(\mtx X_{k})-h(\mtx X_{k+1}))\leq h(\mtx X_0)-\bar h<+\infty.
	\end{equation*}
	So, we have $\norm{\mtx X_{k_j+1}-\mtx
		X_{k_j}}_\frobenius\rightarrow0$ and $\norm{\mtx
		X_{k_j}-\mtx X_{k_j-1}}_\frobenius\rightarrow0$ as
	$k\rightarrow+\infty$.  Moreover, from the above facts, it is
	easy to prove that $\{\mtx X_{k_j+1}\}$ also converges to
	$\bar X$. Together with \eqref{le:diff}, we know $\dist(\mtx
	0,\partial h(\mtx X_{k_j+1}))\rightarrow0$ as
	$j\rightarrow+\infty$. Let $\mathcal U$ be a neighborhood of $\bar{\mtx X}$ with radius $M$. Then, we have
	\begin{equation}\label{limits:subdiff}
	h(\mtx X)\geq h(\mtx X_{k_j+1})-\dist(\mtx 0,\partial h(\mtx
	X_{k_j+1}))\norm{\mtx X-\mtx X_{k_j+1}}_\frobenius,\ \forall
	\mtx X\in \mathcal U.
	\end{equation}
	Since $\lim\limits_{j\rightarrow+\infty}h(\mtx X_{k_j+1})=
	h(\bar{\mtx X})$, taking the limit $j\rightarrow+\infty$ in
	\eqref{limits:subdiff}, we have $h(\mtx X)\geq h(\bar{\mtx
		X})$ for all $\mtx X\in \mathcal U$.  By the convexity of
	$h$, $\bar{\mtx X}$ is a global minima of
	\eqref{eq:regularized}, i.e.\ $\mtx 0\in\partial h(\bar{\mtx X})$.
\end{proof}
Based on the proof of Theorem~1 in \cite{bolte2014proximal}, we can prove \cref{thm:convergence} as follows.

\begin{proof}
	It is noted that the objective function $h$ in \eqref{eq:regularized} is a KL function as both $f$ and $g$ are semi-algebraic. Since $\{\mtx X_k\}$ is bounded, $w(\mtx X_0)$ is not
	empty. Denote $h(\mtx X)=\bar{h}$ for all $\mtx X\in w(\mtx
	X_0)$ as $\mtx 0\in\partial h(\mtx X),\forall \mtx X\in w(\mtx X_0)$ from \cref{lemma:2}. 
	Let $\{\mtx X_{k_j}\}$ be a convergent subsequence of $\{\mtx X_k\}$ such that $\mtx X_{k_j}\rightarrow\bar{\mtx X}$ as $j\rightarrow+\infty$. 
	Since $\mtx X_k\in\PSD,\forall k$ the decreasing property
	\eqref{le:suff} yields $\lim\limits_{k\rightarrow+\infty} h(\mtx X_k)=\lim\limits_{j\rightarrow+\infty}h(\mtx X_{k_j})=\bar{h}$. Moreover, we assume that $h(\mtx X_k)>\bar{h}$ for all $k$. Otherwise, if there exists some $k_0$ such that $h(\mtx X_{k_0})=\bar{h}$, from the decreasing property \eqref{le:suff} and $h(\mtx X_k)\geq \bar{h}$, we know $\mtx X_k=\mtx X_{k_0}$ for all $k\geq k_0$. By the definition of $w(\mtx X_0)$, we have $\lim\limits_{k\rightarrow+\infty}\dist(\mtx X_k,w(\mtx X_0))=0$. Applying the uniformized KL property  (\cite{bolte2014proximal}, Lemma 6) of $h$ on $w(\mtx X_0)$, there exist $k_\ell>0$, $\eta>0$ and $\phi\in\Phi_\eta$ such that for all $\bar{\mtx X}\in w(\mtx X_0)$, we have
	\begin{equation}\label{eq:UKL}
	\phi^{'}(h(\mtx X_k)-\bar{h})\dist(\mtx 0,\partial h(\mtx X_k))\geq 1,\ \forall k>k_\ell.
	\end{equation}
	By the inequality \eqref{le:diff}, \eqref{eq:UKL} implies
	\begin{equation}\label{eq:phi'}
	\phi^{'}(h(\mtx X_k)-\bar{h})\geq \frac{1}{b(\norm{\mtx X_{k}-\mtx X_{k-1}}_\frobenius+\bar w\norm{\mtx X_{k-1}-\mtx X_{k-2}}_\frobenius)},\ \forall k>k_\ell.
	\end{equation}
	By the concavity of $\phi$ \eqref{le:suff} and \eqref{eq:phi'}, we know that 
	\begin{equation}\label{eq:concave}
	\begin{split}
	\Delta_{k,k+1}:&=\phi(h(\mtx X_k)-\bar{h})-\phi(h(\mtx X_{k+1})-\bar{h})\\
	&\geq\phi^{'}(h(\mtx X_k)-\bar{h})(h(\mtx X_k)-h(\mtx X_{k+1}))\geq \frac{a\norm{\mtx X_{k+1}-\mtx X_k}_\frobenius^2}{b(\norm{\mtx X_{k}-\mtx X_{k-1}}_\frobenius+\bar w\norm{\mtx X_{k-1}-\mtx X_{k-2}}_\frobenius)}
	\end{split}
	\end{equation}
	Define $C=b/a>0$ in \eqref{eq:concave} and from the geometric inequality, we have
	\begin{equation}\label{eq:basicin}
	2\norm{\mtx X_k-\mtx X_{k+1}}_\frobenius\leq \norm{\mtx X_k-\mtx X_{k-1}}_\frobenius+\bar{w}\norm{\mtx X_{k-1}-\mtx X_{k-2}}_\frobenius+C\Delta_{k,k+1}
	\end{equation}
	For any $k>k_\ell$, summing up \eqref{eq:basicin} from $i=k_\ell+1,\ldots,k$, we have
	\begin{equation}\label{eq:sum}
	\begin{split}
	2\sum_{i=k_\ell+1}^k\norm{\mtx X_i-\mtx X_{i+1}}_\frobenius& \leq \sum_{i=k_\ell+1}^k \norm{\mtx X_i-\mtx X_{i-1}}_\frobenius+\bar{w}\norm{\mtx X_{i-1}-\mtx X_{i-2}}_\frobenius+C\Delta_{i,i+1} \\
	& \leq (1+\bar{w})\sum_{i=k_\ell-1}^k \norm{\mtx X_{i+1}-\mtx X_i}_\frobenius+C\Delta_{k_\ell+1,k+1}
	\end{split}
	\end{equation}
	where the last inequality is from the fact that $\Delta_{p,q}+\Delta_{q,r}=\Delta_{p,r}$ for all $p,q,r\in\mathbb{N}$. Since $\phi\geq0$ and $\bar{w}\in[0,1)$, the inequality \eqref{eq:sum} implies
	\begin{equation}\label{eq:sum1}
	(1-\bar{w})\sum_{i=k_\ell+1}^k\norm{\mtx X_i-\mtx X_{i+1}}_\frobenius\leq (1+\bar{w})\sum_{i=k_\ell-1}^{k_\ell}\norm{\mtx X_{i+1}-\mtx X_i}_\frobenius+C\phi\parens*{h\parens*{\mtx X_{k_\ell}}-\bar{h}}
	\end{equation}
	Let $k\rightarrow+\infty$ in \eqref{eq:sum1}, and thus
        $\sum_{k=1}^{+\infty}\norm{\mtx X_k-\mtx
          X_{k-1}}_\frobenius<+\infty$, which implies that $\{\mtx
        X_k\}$ converges to an $\bar{\mtx X}$, the sole member of $w(\mtx
        X_0)$.  Hence, $\vec 0\in\partial h(\bar{\mtx X})$.
\end{proof}

\section{Proof of \cref{prop:bound}}
\label{p2}
By the decreasing property \eqref{le:suff}, it is sufficient to show that the sub-level set $\brackets*{h(\mtx X)\leq h_0}$ is bounded where $h_0=h(\mtx X_0)$.
Given any $\mtx X\in\brackets*{h(\mtx X)\leq h_0}$, we have $\mtx X\in\PSD$ and $f(\mtx X)\leq h_0$. Then, 
there exists some $\mtx Y\in\mathbb{C}^{N\times N}$ such that $\mtx X = \mtx Y\mtxh Y$ with
\begin{equation}
\label{in:bound1}
\tfrac12\norm*{\mathscr{A}\parens*{\mtx Y\mtxh Y}-\vec b}_2^2 + \mu\innerproduct{\mtx R}{\mtx Y\mtxh Y\,}\leq h_0\text{.}
\end{equation}
By the triangle inequality,
\eqref{in:bound1} implies
\begin{equation*}
	\norm*{\mathscr{A}\parens*{\mtx Y\mtxh Y}}_2\leq \sqrt{2 h_0}+\norm{\vec b}_2\text{.}
\end{equation*}
as $\mtx R\in\PSD$ and $\mu\geq0$.
Standard norm inequalities gives:
\begin{equation*}
	\norm*{\mathscr{A}\parens*{\mtx Y\mtxh Y}}_1\leq \sqrt{M}\parens*{\sqrt{2h_0}+\norm{\vec b}_2}\text{.}
\end{equation*}
We note here that we can also write
\begin{equation*}
	\mathscr{A}\parens*{\mtx Y\mtxh Y} = \mtx C\brackets[\Big]{\parens*{\mtx A\mtx
			Y}\hadamardprod\parens*{\mtx A\mtx Y}^\conj}\mtx 1
\end{equation*} 
where $\hadamardprod$ denotes the element-wise (Hadamard) product,
$\mtx 1\in\realnumbers^N$ of all $1$s, and $\mtx
C\in\realmatrices{M}{MN}$ is a blockwise diagonal matrix whose $m$th
block is equal to $\mtxt 1/\sigma_m$.  Let $\sigma_{\text{max}}$
denote the minimal value among the $\sigma_m$s.  Since the bracketed
expression is the element-wise magnitude squared of $\mtx A\mtx Y$ and
hence nonnegative, we then have
\begin{equation*}
	\norm*{\mtx A\mtx Y}_\frobenius^2\leq \sigma_{\text{max}}\sqrt{M}\parens*{\sqrt{2h_0}+\norm{\vec b}_2}\text{.}
\end{equation*}
Since $\mtx A$ has full column rank, we obtain that
\begin{equation*}
	\norm*{\mtx Y}_\frobenius^2\leq \frac{\sigma_{\text{max}}\sqrt{M}}{a_{\text{min}}^2}\parens*{\sqrt{2h_0}+\norm{\vec b}_2}\text{,}
\end{equation*}
where $a_{\text{min}}>0$ the smallest nonzero singular value of $\mtx A$.
Since $\mtx X=\mtx Y\mtxh Y\,$, the above inequality implies that the
nuclear norm of $\mtx X$ is bounded:
\begin{equation*}
	\norm*{\mtx X}_*\leq \frac{\sigma_{\text{max}}\sqrt{M}}{a_{\text{min}}^2}\parens*{\sqrt{2h_0}+\norm{\vec b}_2}
\end{equation*}
and hence $\mtx X$ must be bounded assuming bounded $\vec b$ and
finite $\sigma_{\text{max}}$, and hence the sub-level set must also be bounded.

\section{Proof of \cref{prop:sub}}
\label{p1}
The dual problem of \eqref{eq:regularized} is 
\begin{equation}\label{min:dual}
\min_{\vec w,\mtx S}~\tfrac{1}{2}\norm{\vec w-\vec b}_2^2 + \iota_{\PSD}(\mtx S),\ \mbox{s.t.}\  \mathscr{A}^\HT(\vec w)+\mtx S = \mtx R.
\end{equation}
The Lagrangian function $\ell$ associated with the problem \eqref{min:dual} is 
\begin{equation*}
	\ell(\vec w, \mtx S;\mtx X)=\tfrac{1}{2}\norm{\vec w-\vec b}_2^2+\iota_{\PSD}(\mtx S) + \langle \mtx X,\mathscr{A}^\HT(\vec w)+\mtx S - \mtx R\rangle.
\end{equation*}
Since \eqref{min:dual} is strongly convex with respect to $\vec w$ and $\mtx S$ is uniquely determined by $\vec w$, thus \eqref{min:dual} admits a unique solution, denoted by $(\bar{\vec w},\bar{\mtx S})$. By the Slater's condition and $\mtx X_\star\in\mathcal{X}_\star$, the point $(\bar{\vec w},\bar{\mtx S},\mtx X_\star)$ satisfies the KKT equations:
\begin{equation}\label{eq:KKT}
0=\bar{\vec w}-\vec b+ \mathscr{A}(\mtx X_\star) ,\ 0\in\mtx X_\star+\mathcal{N}_{\PSD}(\bar{\mtx S}),\ 0=\mathscr{A}^\HT(\bar{\vec w})+\bar{\mtx S} -\mtx R.
\end{equation}
Combining the first and the last equalities in \eqref{eq:KKT},
we know $\bar{\mtx S} = (\mathscr{A}^\HT(\mathscr{A}(\mtx
X_\star)-\vec b))+\mtx R=\Grad f(\mtx X_\star)$. It is from \eqref{sub:ass} and the Proposition 3.2 in \cite{cui2016asymptotic} that $\rank(\mtx X_\star)+\rank(\bar{\mtx S})=N$. Then, applying the Corollary~3.1 in \cite{cui2016asymptotic}, we obtain that for any $\bar{\mtx X}\in\mathcal{X}_\star$, $\partial h$  is metrically subregular at $\bar{\mtx X}$ for $0$.

\section{Proof of \cref{thm:linear}}
\label{proof2}
We first present a lemma that
establishes the relationship between metric sub-regularity of
$\partial h$ and the KL inequality at critical points.
\begin{lemma}\label{lemma:subTOKL}
	Let $h=f+g$ be defined as they are in \eqref{eq:fandg} and assume $\mathcal{X}_\star\neq\emptyset$. If $\partial h$ is metrically subregular at $\bar{\mtx X}$ for $\vec 0$, then $h$ satisfies the KL inequality at $\bar{\mtx X}$ with $\phi(x)=c\sqrt{x}$ for some $c>0$.
\end{lemma}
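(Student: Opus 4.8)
The plan is to use the convexity of $h$ to upgrade the metric subregularity estimate into a local quadratic-growth (error bound) inequality of the form $h(\mtx X)-h_\star\le\kappa\,\dist(\mtx 0,\partial h(\mtx X))^2$ near $\bar{\mtx X}$, from which the KL inequality with $\phi(s)=c\sqrt{s}$ is immediate.

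First I would collect the elementary facts. Since $f$ is convex and differentiable and $g=\iota_{\PSD}$ is convex, $h=f+g$ is proper, lower semicontinuous and convex, so its limiting subdifferential coincides with the ordinary convex subdifferential; in particular $\mtx 0\in\partial h(\mtx X)$ holds exactly when $\mtx X$ minimizes $h$, so $(\partial h)^{-1}(\mtx 0)=\mathcal{X}_\star$, a closed convex set on which $h$ is constant with value $h_\star:=h(\bar{\mtx X})$ (the point $\bar{\mtx X}$ lies in $\mathcal{X}_\star$ because metric subregularity at $\bar{\mtx X}$ for $\mtx 0$ presupposes $\mtx 0\in\partial h(\bar{\mtx X})$). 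Metric subregularity then furnishes $\kappa,\epsilon>0$ with $\dist(\mtx X,\mathcal{X}_\star)\le\kappa\,\dist(\mtx 0,\partial h(\mtx X))$ for every $\mtx X\in\mathbb{B}(\bar{\mtx X},\epsilon)$.

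Next, fix $\mtx X\in\mathbb{B}(\bar{\mtx X},\epsilon)$ with $h_\star<h(\mtx X)<+\infty$; if $\partial h(\mtx X)=\emptyset$ the KL inequality holds trivially, so assume it is nonempty and let $\mtx V\in\partial h(\mtx X)$ be its minimal-norm element, so that $\norm{\mtx V}_\frobenius=\dist(\mtx 0,\partial h(\mtx X))$. Writing $\mtx X^\sharp=\Proj_{\mathcal{X}_\star}(\mtx X)$ and applying the subgradient inequality at $\mtx X$ to the point $\mtx X^\sharp$ gives $h_\star=h(\mtx X^\sharp)\ge h(\mtx X)+\innerproduct{\mtx V}{\mtx X^\sharp-\mtx X}$, whence, by the Cauchy--Schwarz inequality,
\[
h(\mtx X)-h_\star\le\innerproduct{\mtx V}{\mtx X-\mtx X^\sharp}\le\norm{\mtx V}_\frobenius\,\norm{\mtx X-\mtx X^\sharp}_\frobenius=\dist(\mtx 0,\partial h(\mtx X))\,\dist(\mtx X,\mathcal{X}_\star).
\]
Combining with metric subregularity yields the error bound $h(\mtx X)-h_\star\le\kappa\,\dist(\mtx 0,\partial h(\mtx X))^2$ throughout $\mathbb{B}(\bar{\mtx X},\epsilon)$.

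Finally I would read off the KL inequality. Choose $\phi(s)=2\sqrt{\kappa}\,\sqrt{s}$, which lies in $\Phi_\eta$ for every $\eta\in(0,+\infty]$ since it is concave, continuous, vanishes at $0$, and has $\phi'(s)=\sqrt{\kappa}/\sqrt{s}>0$ on $(0,\eta)$, and take the neighborhood $\mathcal{U}=\mathbb{B}(\bar{\mtx X},\epsilon)$. For any $\mtx X\in\mathcal{U}$ with $h_\star<h(\mtx X)<h_\star+\eta$ the error bound rearranges to $\dist(\mtx 0,\partial h(\mtx X))\ge\sqrt{(h(\mtx X)-h_\star)/\kappa}$, so
\[
\phi'(h(\mtx X)-h_\star)\,\dist(\mtx 0,\partial h(\mtx X))\ge\frac{\sqrt{\kappa}}{\sqrt{h(\mtx X)-h_\star}}\cdot\sqrt{\frac{h(\mtx X)-h_\star}{\kappa}}=1,
\]
which is exactly inequality \eqref{eq:KL} with $c=2\sqrt{\kappa}$. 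I expect the only delicate points to be bookkeeping: confirming that the convex and limiting subdifferentials agree so that $(\partial h)^{-1}(\mtx 0)=\mathcal{X}_\star$, and disposing of the cases $\partial h(\mtx X)=\emptyset$ or $h(\mtx X)=+\infty$, which are trivial because the KL condition is imposed only on the finite sublevel slab. The whole substance of the argument is the single convexity estimate in the display above, and that is where the square of the distance --- hence the exponent $1/2$ --- enters.
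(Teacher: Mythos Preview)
Your proof is correct and follows essentially the same route as the paper: use convexity together with Cauchy--Schwarz to get $h(\mtx X)-h_\star\le\dist(\mtx 0,\partial h(\mtx X))\,\dist(\mtx X,\mathcal{X}_\star)$, then plug in the metric subregularity bound to obtain the quadratic error estimate and read off $\phi(s)=2\sqrt{\kappa s}$. Your version is in fact a bit more careful about the bookkeeping (identifying the convex and limiting subdifferentials, handling empty $\partial h(\mtx X)$, verifying $\phi\in\Phi_\eta$), but the substance is identical.
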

\begin{proof} Since $h$ is convex, we know $\mathcal{X}_\star=\partial h^{-1}(\mtx 0)$. If $\partial h$ is metrically subregular at $\bar{\mtx
		X}$ for $\mtx 0$, then there exist $\kappa$ and $\epsilon>0$ such that 
	\begin{equation}\label{eq:subregular1}
	\dist(\mtx X,\mathcal{X}_\star)=\dist\parens*{\mtx X,\partial
		h^{-1}(\mtx 0)}\leq \kappa \dist(\mtx 0, \partial h(\mtx X)),\ \forall \mtx x\in\mathbb{B}\parens*{\bar{\mtx X},\epsilon}.
	\end{equation}
	Thus, for any $\mtx X\in \mathbb{B}(\bar{\mtx X},\epsilon)\cap[h(\mtx X)>h(\bar{\mtx X})]$, we have
	\begin{equation}\label{eq:subToKL}
	\begin{split}
	h(\mtx X)-h(\bar{\mtx X})& = h(\mtx X)-h(\mtx X_\star)\leq \norm{\mtx U}_\frobenius\norm{\mtx X_\star-\mtx X}_\frobenius,\ \forall \mtx X_\star\in\mathcal{X}_\star,\ \forall \mtx U\in\partial h(\mtx X),
	\end{split}	
	\end{equation}
	where the inequality is a consequence of the convexity of $h$
	and the Cauchy-Schwartz inequality. Taking the infimum over
	all $\mtx X_\star\in\mathcal{X}_\star$ and over all $\mtx
	U\in\partial h(\mtx X)$ in \eqref{eq:subToKL} and then using \eqref{eq:subregular1}, we have
	\begin{equation*}
		h(\mtx X)-h(\bar{\mtx X})\leq \kappa\dist\parens*{\mtx 0,\partial h(\mtx X)}^2,
	\end{equation*}
	which implies $h$ satisfies the KL property with $\phi(x)=2\sqrt{\kappa x}$.
\end{proof}
Now, inspired by the analysis in \cite{attouch2010proximal}, we are ready to present the proof for \cref{thm:linear}.
\begin{proof}
	Let $\bar{\mtx X}\in\mathcal{X}_\star$ such that
	$\lim\limits_{k\rightarrow+\infty}\mtx X_k=\bar{\mtx
		X}$. Assume that $h(\mtx X_k)>h(\bar{\mtx X})$ for all
	$k$. Otherwise, by the decrease property \eqref{le:suff}, it
	is easy to know $h(\mtx X_k)=h(\bar{\mtx X})$ and $\mtx
	X_k=\bar{\mtx X}$ for all $k>k_0$ whenever $h(\mtx
	X_{k_0})=h(\bar{\mtx X})$.  Define $r_k=h(\mtx
	X_k)-h(\bar{\mtx X})$. The fact that $\mtx
	X_k\rightarrow\bar{\mtx X}$ as $k\rightarrow+\infty$ combined
	with \cref{prop:sub} and \cref{lemma:subTOKL} establishes the
	existence of some $k_\ell>0$ such that the following
	inequality holds:
	\begin{equation}\label{eq:KLconv}
	h(\mtx X_k)-h(\bar{\mtx X})\leq \kappa\dist(\mtx 0,\partial h(\mtx X_k))^2, \ \forall k>k_\ell.
	\end{equation}
	Applying \eqref{le:diff} and \eqref{le:suff} to \eqref{eq:KLconv}, we obtain that
	\begin{equation}\label{ieq:r}
	\begin{aligned}
	r_k &\leq \kappa b^2\parens*{\norm{\mtx X_k-\mtx X_{k-1}}_\frobenius+\bar{w}\norm{\mtx X_{k-1}-\mtx X_{k-2}}_\frobenius}^2 \\
	& \leq 2\kappa b^2\parens*{\norm{\mtx X_k-\mtx X_{k-1}}_\frobenius^2+\bar{w}^2\norm{\mtx X_{k-1}-\mtx X_{k-1}}_\frobenius^2}\\
	& \leq 2\kappa a^{-1}b^2\braces*{F(\mtx X_{k-1})-F(\mtx
		X_{k})+\bar{w}^2\brackets[\Big]{F(\mtx X_{k-2})-F(\mtx X_{k-1})}}
	\\
	&= c\parens[\big]{r_{k-1}-r_k+\bar{w}(r_{k-2}-r_{k-1})}
	\end{aligned}
	\end{equation}
	where $c=2\kappa a^{-1}b^2$ and the second inequality is from the geometric inequality. Since $r_k\leq r_{k-1}$ for all $k$, the inequality \eqref{ieq:r} implies
	\begin{equation*}
		r_k\leq \frac{c}{1+c}\brackets[\bigg]{(1-\bar{w})r_{k-1}+\bar{w}r_{k-2}}\leq\frac{cr_{k-2}}{1+c}\leq r_{k_\ell}\parens[\bigg]{\frac{c}{1+c}}^{\frac{k-k_\ell-1}{2}},\ \forall k>k_{\ell}.
	\end{equation*}
	Furthermore, using \eqref{eq:sum1}, for all $\hat k>k>k_\ell$, we have
	\begin{equation}\label{conv:1}
	\begin{aligned}
	(1-\bar{w})\norm{\mtx X_{\hat k}-\mtx X_{k}}_\frobenius&
	\leq(1-\bar{w})\sum_{i=k}^{\hat k-1}\nolimits\norm{\mtx X_{i+1}-\mtx X_i}_\frobenius\\
	&\leq (1+\bar{w})\sum_{i=k-2}^{k}\nolimits\norm{\mtx X_{i+1}-\mtx X_i}_\frobenius + b\sqrt{r_{k}}/a\\
	&\leq (1+\bar{w})\sum_{i=k-2}^{k}\sqrt{\frac{h(\mtx X_{i+1})-h(\mtx X_i)}{a}}+ b\sqrt{r_k}/a\leq \tilde{\nu}\sqrt{r_{k-2}}
	\end{aligned}
	\end{equation}
	where $\tilde{\nu}=(1+\bar{w})\sqrt{2/a}+b/a$. Letting $\hat
	k\rightarrow+\infty$ in \eqref{conv:1}, we thus know that for all $k>k_\ell+2$
	\begin{equation*}
	\norm{\mtx X_k-\bar{\mtx X}}_\frobenius\leq \nu\sqrt{r_{k_\ell}}\bigg(\frac{c}{1+c}\bigg)^{\frac{k-k_\ell-3}{4}},
	\end{equation*}
	where $\nu = (1-\bar{w})^{-1}\tilde{\nu}$.
\end{proof}

\end{document}